\pgfplotsset{compat=newest}
\newcommand{\tool}{{\sc Ohie}\xspace}
\newcommand{\codename}{{\sc Ohie}\xspace}
\newcommand{\blocksize}{$20$ KB\xspace}
\newcommand{\Paragraph}[1]{\smallskip\noindent{\bf #1}}
\newtheorem{theorem}{Theorem}
\newtheorem{lemma}[theorem]{Lemma}
\renewcommand{\paragraph}[1]{\pagebreak\mbox{XXXXX} \pagebreak\mbox{XXXXX} \pagebreak\mbox{XXXXXX}}
\def\BibTeX{{\rm B\kern-.05em{\sc i\kern-.025em b}\kern-.08em
    T\kern-.1667em\lower.7ex\hbox{E}\kern-.125emX}}
\begin{document}

\title{\Huge \codename: Blockchain Scaling Made Simple}
\author{
{ Haifeng Yu \hspace*{18mm} Ivica Nikoli\'c \hspace*{18mm} Ruomu Hou \hspace*{18mm} Prateek Saxena}\\
{Department of Computer Science, National University of Singapore}\\
{\em haifeng@comp.nus.edu.sg \,
  inikolic@nus.edu.sg \, houruomu@comp.nus.edu.sg \, prateeks@comp.nus.edu.sg}
}

\maketitle

\begin{abstract}
Many blockchain consensus protocols have been proposed
recently to {\em scale} the throughput of a blockchain with
available bandwidth. However,
these protocols are becoming increasingly complex, making it more and more difficult to
produce proofs of their security guarantees.
We propose a novel permissionless blockchain protocol \codename which
explicitly aims for {\em simplicity}. \codename composes as many
parallel instances of Bitcoin's original (and simple) backbone
protocol as needed to achieve excellent throughput.
We formally prove the safety and liveness properties of \codename. We
demonstrate its performance with a prototype implementation and
large-scale experiments with up to $50,000$ nodes.
In our experiments, \codename achieves linear
scaling with available bandwidth, providing about $4$-$10$Mbps
transaction throughput (under $8$-$20$Mbps per-node available
bandwidth configurations) and at least about $20$x better
decentralization over prior works.

\end{abstract}

\section{Introduction}
\label{sec:intro}

Blockchain protocols power several open computational platforms, which
allow a network of nodes to agree on the state of a distributed ledger
periodically. The distributed ledger provides a total order over
{\em transactions}. Nodes connect to each other via an open
peer-to-peer (P2P) overlay network, which is {\em permissionless}: it
allows any computer to connect and participate in the computational
service without registering its identity with a central authority. A
blockchain consensus protocol enables every node to {\em confirm} a
set of transactions periodically, batched into chunks called blocks.
The protocol ensures that honest nodes all
agree on the same total ordering of the confirmed blocks, and that
the set of confirmed blocks grow over time.  The
earliest such protocol was in Bitcoin~\cite{bitcoin}, and has
spurred interest in many blockchain platforms since.




%
%
%
%

The seminal Bitcoin protocol, published about a decade ago~\cite{bitcoin}, laid some
of the key foundations for modern blockchain protocols.
But as Bitcoin gained popularity, its low throughput has been cited as
glaring concern resulting in high costs per
transaction~\cite{scaling16}. Currently, Ethereum and Bitcoin process
only about $5$KB or $10$ transactions per second on average, which is less than $0.2\%$ of the average available bandwidth in their respective
P2P networks~\cite{gencer-fc18}.  Many recent
research efforts have thus focused on improving the transaction
throughput, resulting in a series of beautiful designs for
permissionless
blockchains~\cite{algorand,elastico,omniledger,bitcoin-ng,byzcoin,rapidchain,solida,peercensus,phantom,conflux}.

Bitcoin's core consensus protocol---called Nakamoto consensus---still stands out in one critical aspect: it is remarkably {\em simple}. Nakamoto consensus can be fully described in a few dozens of lines
of pseudo-code. Such simplicity makes it extensively
amenable to re-parameterization in hundreds of deployments, and more
importantly, a series of formal proofs on
its security guarantees have been carefully
and independently established in several research
works~\cite{garay2015bitcoin,pow-difficulty,kiayias15tradeoff,kiffer18,pass2017analysis}.

The importance of keeping constructions {\em simple} enough to
allow such formal proofs and cross validations cannot be
over-emphasized.  Consensus protocols are notoriously difficult to
analyze in the presence of byzantine failures.
Formal proofs/analysis are especially
important to protocols that are difficult to upgrade once deployed:
Upgrades of blockchain protocols after deployment (i.e., ``hard forks'') cause both
philosophical disagreements and financial impact.

Some recent high-throughput blockchain protocols do strive to retain the
simplicity of Nakamoto consensus. Unfortunately, many of them do not come with formal end-to-end security proofs. As an example,
Conflux~\cite{conflux} is a recent high-throughput blockchain protocol with an elegant design.
But it has only provided
informal security
arguments (Section 3.3 in \cite{conflux}).
Appendix~\ref{sec:conflux}
shows that in our simulation, as the throughput of Conflux increases,
the security properties of Conflux deteriorate.\footnote{Related
  observations on Conflux have also been independently made by Bagaria
  et al.~\cite{bagaria18} and Fitzi et al.~\cite{fitzi18}.}  Such an
undesirable property of Conflux is hard to discover via informal
arguments. The Conflux paper itself also presented effective attacks on a
prior protocol (Phantom~\cite{phantom}) that comes without proofs.

\Paragraph{Our goal.}
This work aims to develop a {\em simple} blockchain consensus protocol, which should admit
formal end-to-end proofs on safety and liveness, while
retaining the high throughput achieved by
state-of-the-art blockchain protocols. Specifically, we aim to achieve:

\begin{enumerate}
\item {\em Near-optimal resilience}: Tolerate an adversarial computational power fraction $f$ close to $\frac{1}{2}$, which is near-optimal;

\item {\em Throughput approaching a significant fraction of the raw network bandwidth}:
The raw available network bandwidth in the P2P network constitutes a crude throughput upper bound for all blockchain protocols. We aim to achieve a throughput, in terms of transactions processed per second, that approaches a significant fraction of this raw network bandwidth.\footnote{Note that the raw available bandwidth is only a rather {\em crude} upper bound, and hence in practice it is unlikely for the throughput to reach this upper bound. For example, this crude upper bound does not take into account factors such that TCP slow start, probabilistic block generation, probabilistic hot-spots in the P2P overlay network, overheads for determining which blocks to gossip, and so on.}

\item {\em Decentralization}: Many dynamically selected
  block proposers should be able to add blocks to the
  blockchain per second, rather than for example, having one leader or a small
  committee add blocks over a long period of time. More block proposers make
  transactions less susceptible to censorship~\cite{smartpool,censorcoin}, and a
  DoS attack against a small number of nodes will no longer impact the
  availability of the entire system~\cite{heilman2015eclipse}.
\end{enumerate}


\Paragraph{Our approach.}
This work proposes \tool,\footnote{The word ``ohie'' comes from the Maori language and means ``simple''.} a novel
blockchain protocol for the permissionless setting.
Specifically, \tool composes many {\em parallel} instances of the
Nakamoto consensus protocol. \tool first applies a simple mechanism to
force the adversary to evenly split its adversarial computational
power across all these chains (i.e., Nakamoto consensus
instances). Next, \tool
proposes a simple solution to securely
arrive at a global order for blocks across all the parallel chains,
hence achieving consistency.


The {\em modularity} of \tool enables us to prove (under any given constant $f< \frac{1}{2}$)
its safety and liveness properties via a {\em reduction} from those of Nakamoto consensus.
Our proof
invokes existing theorems on Nakamoto
consensus, making the proof modular and
streamlined~\cite{pass2017analysis}.
By running as many (e.g., $1000$) chains  as the
network bandwidth permits,
\tool's throughput scales with available network bandwidth. Finally, the parallel chains in \tool lead to excellent decentralization, since many miners can simultaneously add new blocks.

\Paragraph{Our results.}
We have implemented a prototype of \tool,
the source code of which is publicly available~\cite{ohie-code}.
We have evaluated it on Amazon EC2 with up
to $50,000$ nodes, under similar settings as in prior
works~\cite{algorand,conflux}. Our evaluation first shows that
\codename's throughput scales linearly with
available bandwidth, as is the case with state-of-the-art
protocols~\cite{rapidchain,elastico,omniledger,bitcoin-ng,algorand}.
For example, under configurations with $8$-$20$Mbps per-node
bandwidth, \codename achieves about $4$-$10$Mbps transaction throughput. This translates to close to $1000$ to $2500$ transactions per
second, assuming $500$-byte average transaction size as in Bitcoin. Such throughput is about $550\%$
of the throughput of AlgoRand~\cite{algorand} and $150\%$ of the
throughput of Conflux~\cite{conflux} under
similar available bandwidth.  This suggests that while explicitly focusing on simplicity, \tool retains the high throughput property of modern blockchain designs. Second, regardless of the throughput, the
confirmation latency for blocks in \tool is always below $10$ minutes in our experiments, under security parameters comparable to Bitcoin and Ethereum
deployments. (The confirmation latencies in
Bitcoin and Ethereum are $60$ minutes and $3$ minutes, respectively.)
Finally, our experiments show that the decentralization factor of \codename is at least about $20$x of all prior works.

\section{System Model and Problem}
\label{sec:model}

\Paragraph{System model.}
Our system model and assumptions directly follow several prior works (e.g., \cite{kiffer18,pass2017analysis}).
We model hash functions as random oracles, and assume that some random {\em genesis blocks} are available from
an initial trusted setup. 
We consider a permissionless setting, where nodes have no pre-established identities. We use standard proof-of-work (PoW) puzzles, a form of sybil resistance, to limit the adversary by computation
power. We assume that the entire network has total $n$ units of computational power, and some reasonable estimation of $n$ is known. Out of this, the {\em adversary} controls $fn$ units of computational power, with $f$ being any constant below $\frac{1}{2}$. The adversary can deviate arbitrarily from the prescribed protocol, and hence is {\em byzantine}. We assume that some procedure to estimate the total computation
power exists a-priori~\cite{pow-difficulty}.  Standard PoW schemes help ascertain this periodically. For instance, in
Bitcoin, the rate of PoW solutions is adjusted (periodically) to be approximately
$10$ minutes, and the PoW difficulty essentially maps to the estimated total computation
power in the network. We can use the same mechanism in our design.


Given a fixed block size (e.g., \blocksize), we assume that honest nodes form a well-connected synchronous P2P overlay network, so that an honest node can broadcast (via gossiping) such a block with a maximum latency of $\delta$ to other honest nodes.
Our protocol, much like Bitcoin, can tolerate
variations in the actual propagation delay.
%
Network partitioning attacks can delay block delivery arbitrarily and
can cause honest nodes to lose
inter-connectivity~\cite{hijacking-bitcoin}. Defences to mitigate
these attacks are an important area of research; however, they are
outside the scope of the design of the consensus protocol. If the
network becomes completely asynchronous, blockchain consensus is
considered impossible~\cite{pass2017analysis}. In the presence of
partitions, the CAP theorem suggests that protocols can either choose
liveness or safety, but not
both~\cite{cap-theorem-perspectives,cap-theorem}; we choose
liveness---the same as Nakamoto consensus.
The adversary sees every message as soon as it is sent.
The adversary can arbitrarily inject its own messages into the system
at any time (this captures the {\em selfish mining}
attack~\cite{eyal2018majority}, where newly mined blocks are injected
at strategic points of time).



\Paragraph{Problem definition.}
A blockchain protocol should enable any node at any time to output a sequence of total-ordered blocks, which we call the {\em sequence of confirmed blocks} (or {\sc SCB} in short). For example, in the Bitcoin protocol, the SCB is simply the blockchain itself after removing the last $6$ blocks. {\em Safety} and {\em liveness}, in the context of blockchain protocols, correspond to the {\bf consistency} and {\bf quality-growth} properties of the SCB. Informally, these two properties mean that the
SCB's on different honest nodes at different time are always consistent with each other, and that the total number of {\em honest blocks} (i.e., blocks generate by honest nodes) in the SCB grows over time at a healthy rate. We leave the formal definitions of these properties to Section~\ref{sec:analysis}.


Having {\bf consistency} and {\bf quality-growth}
is sufficient to enable a wide range of different
applications. For example, {\bf consistency} prevents double-spending in a cryptocurrency---if two
transactions spend the same coin, all nodes honor only the first
transaction in the total order
\footnote{Transactions spending the same coin as earlier ones in the
  total order can simply be skipped as invalid by the user.}.
Similarly, any ``conflicting'' state updates by smart contracts
running on the blockchain can be ordered consistently by all nodes, by
following the ordering in the SCB. In fact, ensuring a total order
is key to support many different consistency
properties~\cite{sequential-consistency,eventual-consistency-bailis,concurrency-db-79}
for applications, including for smart
contracts~\cite{chainspace}.

Finally, if the same transaction is included in multiple blocks in the SCB, then the first occurrence will be processed, while the remaining occurrences will be ignored. To avoid such waste, miners should ideally pick different transactions to
include in blocks.
For example, among all transactions, a miner can pick those transactions whose hashes are the closest to the hash of the miner's public key. Note that
this does not impact safety or liveness at all; it simply reduces the possibility of multiple inclusions (in different blocks) of the same
transaction.

\section{Conceptual Design}
\label{sec:conceptual}

\tool composes $k$ (e.g.,
$k=1000$) {\em parallel} instances of Nakamoto consensus.
Intuitively, we also call these $k$ parallel instances as $k$ parallel
chains. Each chain has a distinct genesis block, and the chains have
ids from $0$ to $k-1$ (which can come from the lexicographic order of all the genesis
  blocks).
Within each instance, we follow the longest-path-rule in
Nakamoto consensus.\footnote{Nakamoto consensus, strictly speaking, maintains a tree of blocks~\cite{bitcoin}. The longest-path-rule selects the longest path from the root (i.e., genesis block)
  to some leaf of the tree, where the leaf is chosen such that the
  path length is maximized. This path is often referred to
  as the chain.}
The miners in \tool extend the $k$ chains
concurrently.

\begin{figure*}
  \centering
  \IfFileExists{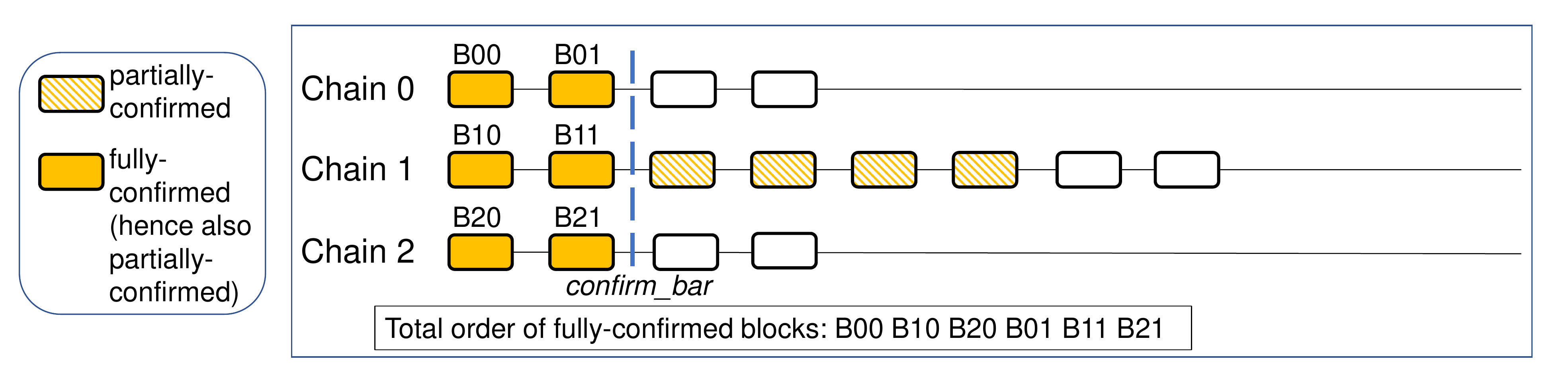}{\includegraphics[width=15cm]{naivedesign}}{\includegraphics[width=16cm]{blockchainpar/naivedesign}}
  \vspace{-0mm}
  \caption{Illustrating ${\tt confirm\_bar}$ under $T=2$. Here chain $0$, $1$, and $2$ have $2$, $6$, and $2$ partially-confirmed blocks, respectively. On chain $1$, only the first $2$ blocks are fully-confirmed.}
  \vspace{-1mm}
  \label{fig:naivedesign}
\end{figure*}

\subsection{Mining in \tool}

Consider any fixed block size (e.g., \blocksize), and the corresponding block propagation delay $\delta$ (e.g., $2$ seconds).
Existing results~\cite{kiffer18,pass2017analysis} on Nakamoto consensus show that for any given constant $f<\frac{1}{2}$, there exists some constant $c$ such that if the {\em block interval} (i.e., average time needed to generate the next block on the chain) is at least $c\cdot \delta$, then Nakamoto consensus will offer some nice security properties. (Theorem~\ref{the:blackbox} later makes this precise.) As an example, for $f =0.43$, it suffices~\cite{pass2017analysis} for $c=5$. Such $c\cdot \delta$, together with $n$, then maps to a certain {\em PoW difficulty} $p$ in Nakamoto consensus, which is the probability of mining success for one hash operation. (Theorem~\ref{the:blackbox} gives the precise mapping.) Given $p$, the hash of a valid block in Nakamoto should have $\log_2 \frac{1}{p}$ leading zeros.

In \tool, to tolerate the same $f$ as above, the hash of a valid block should have $\log_2 \frac{1}{kp}$ leading zeros, where the value $p$ is chosen to be the same as in the above Nakamoto consensus protocol.
Next, the last $\log_2 k$ bits of the hash\footnote{Under our random oracle assumption, any $\log_2 k$ bits of the hash (other that the first $\log_2 \frac{1}{kp}$ bits) work. In implementation, one can choose to use any portion (other that the first $\log_2 \frac{1}{kp}$ bits) of the hash as appropriate, based on the specific hash function.} of the \tool block will index to one of the $k$ chains in \tool, and the block will be assigned to and will extend from that chain.
We have assumed the hash function to be a random oracle, and note that $\log_2 \frac{1}{kp} + \log_2 k = \log_2\frac{1}{p}$. Hence for any {\em given} chain in \tool, the probability of one hash operation ({\em either} done by honest nodes {\em or} done by the adversary) generating a block for that chain\footnote{Namely, total $\log_2\frac{1}{p}$ positions in the block's hash must match some pre-determined values, respectively.} is exactly $p$, which is the same as in Nakamoto consensus.
Similarly, the block interval for any given chain in \tool will be the same as in Nakamoto consensus.

The above relation can be formalized: Taking all mechanisms in \tool (especially the Merkle tree mechanism described next) into account, Lemma~\ref{lemma:reduction} later will prove that the behavior of any given chain in \tool almost follows exactly the  same distribution as the behavior of the single chain in Nakamoto consensus. Note that different chains in \tool are still {\em correlated}, since a block is assigned to exactly one chain. But we will be able to properly bound the probability of all bad events (whether correlated or not) via a simple union bound.

Finally, since \tool has $k$ parallel chains, on expectation there will be total $k$ blocks (across all chains) generated every $c \cdot \delta$ time, instead of just one block. Our experiments later will confirm the following simple yet critical property: {\em Propagating many parallel blocks has minimal negative impact on the block propagation delay $\delta$, as compared to propagating a single such block, until we start to saturate the network bandwidth of the system.} Hence in \tool, we
use as large a $k$ as possible to effectively utilize all the bandwidth in the system, subject to the condition that $\delta$ is minimally impacted.

\subsection{Security of Individual Chains}
\label{sec:individualchain}

In Nakamoto consensus, a new block $B$ extends from some existing block $A$. The
PoW computes over $B$, which contains the hash
of $A$ as a field, cryptographically binding the extension of $A$ by
$B$.
In \tool, however, a miner does not know
which chain a new block will extend until it finishes the PoW puzzle,
the last $\log_2 k$ bits of which then determine the chain extended.

To deal with this, in \tool, a miner uses a  Merkle tree~\cite{merkle-tree} to bind to the last
blocks of all the $k$
chains in its local view.  Specifically, let $A_i$ be the last block\footnote{Exactly the same as in Nakamoto consensus, the last block here refers to the very last block on the longest path from the genesis block.
In particular, this last block is not yet partially-confirmed. (In fact, none of the last $T$ blocks on the path are partially-confirmed.)} of chain $i$, for $0\le i\le k-1$. The miner computes a Merkle tree using $hash(A_0)$ through $hash(A_{k-1})$ as the tree leaves. The root of the Merkle tree
is included in the new block $B$ as an input to the PoW puzzle. After
$B$ is mined, the integer $i$ that corresponds to the last $\log_2 k$
bits of $hash(B)$ determines the block $A_i$ from which $B$ extends.
When disseminating $B$ in the network, a miner includes $hash(A_i)$
and the Merkle proof of $hash(A_i)$ in the message. (The value of $i$ will be directly obtained from the hash of $B$.) The Merkle proofs
are standard, consisting of $\log_2 k$ off-path hashes~\cite{merkle-tree}.

Intuitively, the above design binds each successful PoW to a {\em
  single} existing block on a {\em single} chain from which the new
block extends. For further understanding, let us consider the
following example scenario.  The adversary may intentionally choose
$A_0$ through $A_{k-1}$ all from (say) chain $3$, for constructing the
Merkle tree. Assume the adversary finds a block $B$ whose hash has
$\log_2\frac{1}{kp}$ leading zeros. If the last $\log_2 k$ bits of $B$
does not equal to $3$, then $B$ will not be accepted by any honest
node. Otherwise $B$ will be accepted, and $B$ can only extend from
$A_3$ (instead of any other block) on chain $3$. The reason is that
the honest nodes will need to verify the $4$th leaf (which corresponds
to chain $3$) on the Merkle tree. Only $A_3$ can pass such
verification. Also note that the adversary may intentionally not use
the last block on chain $3$ as $A_3$. This is not a problem, since the
security of \tool ultimately inherits (see Section~\ref{sec:analysis})
from the security of Nakamoto consensus~\cite{pass2017analysis}, and
the adversary in Nakamoto consensus can already extend from any block
(instead of extending from the last block).

\begin{figure*}
  \centering
  \IfFileExists{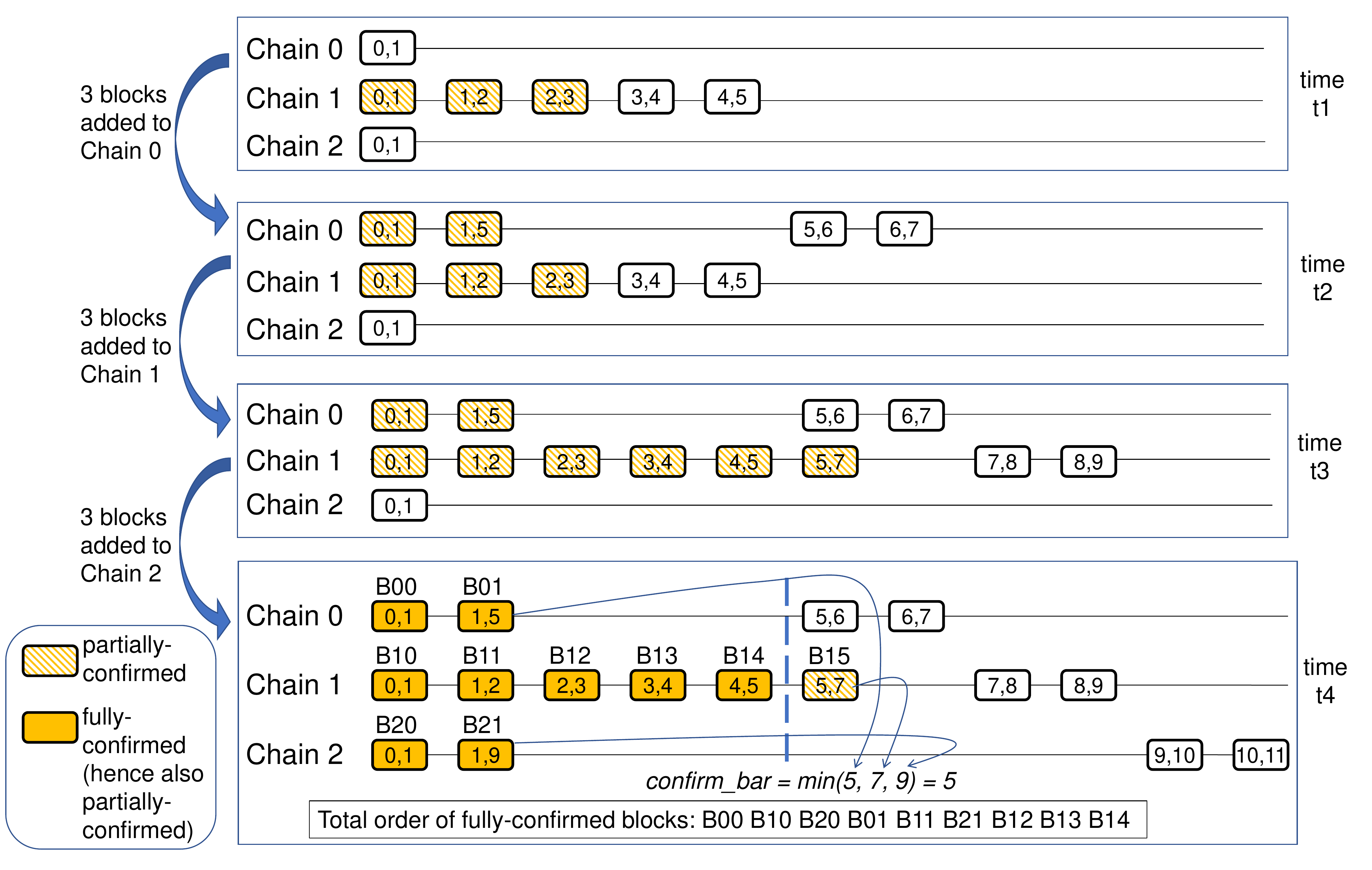}{\includegraphics[width=16cm]{growth}}{\includegraphics[width=16.5cm]{blockchainpar/growth}}
  \vspace{-0mm}
  \caption{Each block has a tuple $({\tt rank},\,\, {\tt next\_rank})$. Our example here assumes that once a block is generated, it is seen by all nodes immediately. Note that \tool does not need such an assumption. We use $T=2$ in this figure.}
  \vspace{-2mm}
  \label{fig:growth}
\end{figure*}

\subsection{Ordering Blocks across Chains -- A Starting Point}
\label{sec:totalorder}

Section~\ref{sec:analysis} will show that each individual chain in \tool
inherits the proven security
properties of Nakamoto consensus~\cite{pass2017analysis}. For
example, with high probability, all blocks on a chain except the last
$T$ blocks (for some parameter $T$) are {\em confirmed} --- the ordering
of these confirmed blocks on the chain will no longer change in the
future. This however does not yet give us a total ordering of all the
confirmed blocks across all the $k$ chains in \tool.
Recall from Section~\ref{sec:model} that a node needs to
generate an SCB (i.e., a total order of all confirmed blocks)
satisfying {\bf consistency} and {\bf quality-growth}.
To avoid notational collision, from this point on, we call all
blocks on a chain except the last $T$ blocks as {\em
partially-confirmed}. Once a partially-confirmed block is added to
SCB, it becomes {\em fully-confirmed}.


%

One way to design the SCB is to first include the
first partially-confirmed block on each of the $k$ chains (there are total
$k$ such blocks, and we order them by their chain ids), and then add
the second partially-confirmed block on each of the $k$ chains, and so
on. This would work well, if every chain has the same number of
partially-confirmed blocks.

When the chains do not have the same number of partially-confirmed
blocks, we will need to impose a {\em confirmation bar} (denoted as
${\tt confirm\_bar}$) that is limited by the chain with the smallest
number of partially-confirmed blocks (see
Figure~\ref{fig:naivedesign}). Blocks after ${\tt confirm\_bar}$ cannot
be included in the total order yet.
This causes a serious problem, since with blocks extending chains at
random, some chains can have more blocks than others. In fact in our experiments (results not shown), such imbalance appears to even grow unbounded over time.


%


\subsection{Ordering Blocks across Chains -- Our Approach}
\label{sec:ourorder}

Imagine that the longest chain is $8$ blocks longer than the shortest
chain. Our basic idea to overcome the previous problem
is that when the next block on the shortest
chain is generated, we simply view it as $8$ blocks worth.
%
Figure~\ref{fig:growth} illustrates this idea. Here
each block has two additional fields used for ordering blocks across chains,
denoted as a tuple $({\tt rank},\,\, {\tt next\_rank})$. In the total
ordering of fully-confirmed blocks, the blocks are ordered by
increasing ${\tt rank}$ values, with tie-breaking based on the {\em
  chain ids}. The {\em chain id} of a block is simply the id of the
chain to which the block belongs. For any new block $B$ that extends
from some existing block $A$, we directly set $B$'s
${\tt rank}$ to be the same as $A$'s ${\tt next\_rank}$. Putting it another way, $A$'s ${\tt next\_rank}$ specifies (and fixes) the ${\tt rank}$ of $B$. A genesis block always has ${\tt rank}$ of $0$ and ${\tt next\_rank}$ of $1$.

\Paragraph{Determining ${\tt next\_rank}$.}
Properly setting the ${\tt next\_rank}$ of a new block $B$ is key to our
design. A miner sees all the chains, and can
infer the expected ${\tt rank}$ of the next upcoming block on each chain
(before $B$ is added to its chain). For example, at time $t_1$
in Figure~\ref{fig:growth}, the ${\tt next\_rank}$
of the current last block (not the last {\em partially-confirmed} block) on each of the three chains is $1$, $5$, and
$1$, respectively. Hence these will be the ${\tt rank}$ of the
upcoming blocks on those chain, respectively. Let $x$ denote the
maximum (i.e., $5$) among these values, and
$x$ corresponds to the ``longest'' chain (in terms of rank)
among the $k$ chains. Regardless of which chain the new block $B$ ends
up belonging to, we want $B$ to help that chain to increase its rank to
catch up with the ``longest'' chain. Hence the node generating $B$ should directly set $B$'s ${\tt next\_rank}$ to be $x$, or any value larger than\footnote{Using a value larger than $x$ will cause $B$'s chain to exceed the length of the currently ``longest'' chain (in terms of rank). This is not a problem since the other chains will catch up with $B$'s chain once a new honest block is added to each of those chains. We do not need the chains to have exactly the same length all the time.} $x$.
(To prevent adversarial manipulation, a careful implementation requiring an additional
${\tt trailing}$ field is needed---
see Section~\ref{sec:full}.) Finally, we always ensure that $B$'s ${\tt next\_rank}$ is at least one larger than $B$'s ${\tt rank}$, regardless of $x$. This guarantees that the ${\tt rank}$ values of blocks on one chain are always increasing.

%

In the example in Figure~\ref{fig:growth}, from time $t_1$ to $t_2$,
there are 3 new blocks (with tuples $(1,5)$, $(5,6)$, and $(6,7)$, respectively) added to chain $0$. For the first new block added (i.e., the block with tuple $(1,5)$), the value of $x$ is $5$, and hence the block's ${\tt next\_rank}$ is set to be $5$. For the
second new block (i.e., the block with tuple $(5,6)$), the value of $x$ is still $5$, while the ${\tt rank}$ of this block
is already $5$. Hence we set the block's
${\tt next\_rank}$ to $6$.


\Paragraph{Determining the total order.}
We can now establish a total order among the blocks in the following
way. Consider any given honest node at any given time and its local
view of all the chains. Let $y_i$ be the ${\tt next\_rank}$ of the last
partially-confirmed block on chain $i$ in this view. For example, at
time $t_4$ in Figure~\ref{fig:growth}, we have $y_0 = 5$, $y_1=7$ and
$y_2 = 9$. Note that the position of a partially-confirmed block on
its respective chain will not change anymore, and hence all these $y_i$'s are
``stable''. Let ${\tt confirm\_bar} \leftarrow \min_{i=1}^k
y_i$. Then, the next partially-confirmed block on any chain
must have a ${\tt rank}$ no smaller than ${\tt confirm\_bar}$.
This means that the node must have seen all
partially-confirmed blocks whose ${\tt rank}$ is smaller than
${\tt confirm\_bar}$. Thus, it is safe (see Lemma~\ref{lemma:carryover}) to deem all
partially-confirmed blocks whose rank is smaller than ${\tt confirm\_bar}$ as
fully-confirmed, and include them in SCB. Finally, all the fully-confirmed blocks will be ordered by increasing ${\tt rank}$ values, with tie-breaking favoring smaller chain ids.
%
As an example, in Figure~\ref{fig:growth}, at time $t_4$, we have ${\tt
  confirm\_bar}$ being $5$. Hence, the $9$ partially-confirmed blocks
whose ${\tt rank}$ is below $5$ become fully-confirmed.

\Paragraph{Summary.}
By properly setting the ${\tt rank}$ values of all the blocks, we
 ensure that the chains remain balanced in terms of the ${\tt
rank}$'s of their respective last blocks. This is regardless of
how imbalanced the chains are in terms of the total number of blocks, hence avoiding the earlier imbalance problem in Section~\ref{sec:totalorder}.


\begin{figure*}
\begin{subfigure}[t]{.4\textwidth}
\pgfplotsset{compat=newest,
legend style={font=\footnotesize},
label style={font=\footnotesize},
tick label style={font=\footnotesize},
title style={font=\footnotesize}}
\centering\captionsetup{width=.8\linewidth}
\begin{framed}
\footnotesize\small
\begin{algorithmic}[1]
\State $d \leftarrow \log_2 (\frac{1}{kp})$; 
\State $V_i \leftarrow$ \{(genesis block of chain $i$, the attachment for genesis block of chain $i$)\}, for $0\le i\le k-1$;
\State $M \leftarrow $ Merkle tree of the hashes of the $k$ genesis blocks;
\label{code:h2computeinitial}
\State ${\tt trailing}$ $\leftarrow $ hash of genesis block of chain $0$;
\State
\State {\sc Ohie}() \{
\State \hspace*{2mm}{\bf repeat forever} \{
\State \hspace*{6mm}ReceiveState(); \label{code:mainloop1}
\State \hspace*{6mm}Mining();
\State \hspace*{6mm}SendState(); \label{code:mainloop2}
\State \hspace*{2mm}\}
\State \}
\State
\State Mining() \{
\State \hspace*{2mm}$B.{\tt transactions} \leftarrow$ get\_transactions();
\State \hspace*{2mm}$B.{\tt root}$ $\leftarrow$ root of Merkle tree $M$;
\State \hspace*{2mm}$B.{\tt trailing} \leftarrow$ ${\tt trailing}$;
\State \hspace*{2mm}$B.{\tt nonce} \leftarrow$  new\_nonce();
\State \hspace*{2mm}$\widehat{B}.{\tt hash} \leftarrow hash(B)$;
\label{code:h1compute}
\State \hspace*{2mm}{\bf if} ($\widehat{B}.{\tt hash}$ has $d$ leading zeroes) \{
\State \hspace*{6mm}$i$ $\leftarrow$ last $\log_2 k$ bits of $\widehat{B}.{\tt hash}$;
\State \hspace*{6mm}$\widehat{B}.{\tt leaf} \leftarrow$ leaf $i$ of $M$;
\State \hspace*{6mm}$\widehat{B}.{\tt leaf\_proof} \leftarrow M$.MerkleProof($i$);
\State \hspace*{6mm}ProcessBlock($B$, $\widehat{B}$);
\State \hspace*{2mm}\}
\State \}
\State
\State SendState() \{
\State \hspace*{2mm} send $V_i$ ($0\le i\le k-1$) to other nodes;
\State \hspace*{2mm} // In implementation, only need to send those blocks not sent before.
\State \}
\State \State ReceiveState() \{
\State \hspace*{2mm} {\bf foreach} $(B, \widehat{B})\in$ received state {\bf do}
\State \hspace*{6mm} ProcessBlock($B$, $\widehat{B}$);
\State // A block $B_1$ should be processed before $B_2$, if $\widehat{B_2}.{\tt leaf}$ or $\widehat{B_2}.{\tt trailing}$ points to $B_1$.
\State \}
\end{algorithmic}
\normalsize
\end{framed}
\end{subfigure}
\begin{subfigure}[t]{.59\textwidth}
\pgfplotsset{compat=newest,
legend style={font=\footnotesize},
label style={font=\footnotesize},
tick label style={font=\footnotesize},
title style={font=\footnotesize}}
\centering\captionsetup{width=.9\linewidth}
\begin{framed}
\footnotesize\small
\begin{algorithmic}[1]
\setcounter{ALG@line}{37}
\State ProcessBlock($B$, $\widehat{B}$) \{
\State \hspace*{2mm}// do some verifications
\State \hspace*{2mm}$i$ $\leftarrow$ last $\log_2 k$ bits of $\widehat{B}.{\tt hash}$;
\State \hspace*{2mm}verify that $\widehat{B}.{\tt hash}$ has $d$ leading zeroes;
\State \hspace*{2mm}verify that $hash(B) = \widehat{B}.{\tt hash}$;
\label{code:h1verify}
\State \hspace*{2mm}verify that $\widehat{B}.{\tt leaf}$ is leaf $i$ in the Merkle tree, based on $B.{\tt root}$ and $\widehat{B}.{\tt leaf\_proof}$;
\label{code:h2verify}
\State \hspace*{2mm}verify that $\widehat{B}.{\tt leaf} = \widehat{A}.{\tt hash}$ for some block $(A, \widehat{A}) \in V_i$;
\State \hspace*{2mm}verify that $B.{\tt trailing} = \widehat{C}.{\tt hash}$ for some
block $(C, \widehat{C}) \in \cup_{j=0}^{k-1}V_j$;
\State \hspace*{2mm}{\bf if} (any of the above 5 verifications fail) {\bf then return};
\State
\State \hspace*{2mm}// compute ${\tt rank}$ and ${\tt next\_rank}$ values
\State \hspace*{2mm}$\widehat{B}.{\tt rank} \leftarrow \widehat{A}.{\tt next\_rank}$;
\label{code:rankstart}
\State \hspace*{2mm}$\widehat{B}.{\tt next\_rank} \leftarrow \widehat{C}.{\tt next\_rank}$;
\State \hspace*{2mm}{\bf if} ($\widehat{B}.{\tt next\_rank}\le \widehat{B}.{\tt rank}$) {\bf then} $\widehat{B}.{\tt next\_rank}\leftarrow \widehat{B}.{\tt rank}+1$;
\label{code:rankend}
\State
\State \hspace*{2mm}// update local data structures
\State \hspace*{2mm}$V_i \leftarrow V_i \cup \{(B, \widehat{B})\}$;
\State \hspace*{2mm}update ${\tt trailing}$;
\State \hspace*{2mm}update Merkle tree $M$;
\label{code:h2compute}
\State \}
\State
      \State OutputSCB() \{
      \State \hspace*{2mm}// determine partially-confirmed blocks and ${\tt confirm\_bar}$
      \State \hspace*{2mm}{\bf for} ($i=0$; $i < k$; $i$++) \{
      \State \hspace*{6mm}$W_i$ $\leftarrow$ get\_longest\_path($V_i$);
      \State \hspace*{6mm}${\tt partial}_i$ $\leftarrow$ blocks in $W_i$ except the last $T$ blocks;
      \State \hspace*{6mm}($B_i$, $\widehat{B_i}$)  $\leftarrow$ the last block in ${\tt partial}_i$;
      \State \hspace*{6mm}$y_i\leftarrow \widehat{B_i}.{\tt next\_rank}$;
      \State \hspace*{2mm}\}
      \State \hspace*{2mm}${\tt all\_partial}$ $\leftarrow$ $\cup_{i=0}^{k-1} {\tt partial_i}$;
      \State \hspace*{2mm}${\tt confirm\_bar} \leftarrow \min_{i=0}^{k-1} y_i$;
      \State
      \State \hspace*{2mm}// determine fully-confirmed blocks and SCB
      \State \hspace*{2mm}$L \leftarrow \emptyset$;
      \State \hspace*{2mm}{\bf foreach} $(B,\widehat{B})\in {\tt all\_partial}$ \{
      \State \hspace*{6mm}{\bf if} ($\widehat{B}.{\tt rank} < {\tt confirm\_bar}$)
      {\bf then} $L \leftarrow L \cup \{(B, \widehat{B})\}$;
      \State \hspace*{2mm}\}
      \State \hspace*{2mm}sort blocks in $L$ by ${\tt rank}$, tie-breaking favoring smaller chain id;
      \State \hspace*{2mm}{\bf return} $L$;
      \State \}
\end{algorithmic}
\normalsize
\end{framed}
\end{subfigure}
\vspace*{2mm}
\caption{Pseudo-code of the $(k, p, \lambda, T)$-\tool protocol.
}
\label{fig:code}
\vspace*{-3mm}
\end{figure*}


\section{Implementation Details}
\label{sec:full}

We call the Nakamoto consensus protocol {\em $(p, \lambda, T)$-Nakamoto}, where the hash of a valid block needs to have $\log_2 \frac{1}{p}$ leading zeros, $\lambda$ is the security parameter (i.e., the length of the hash output), and $T$ is the number of blocks that we remove from the end of the chain in order to obtain partially-confirmed blocks. We call the \tool protocol {\em $(k, p, \lambda, T)$-\tool}, where $\lambda$ and $T$ are the same as above, and where $k$ is the number of chains in \tool. In
{\em $(k, p, \lambda, T)$-\tool}, the hash of a valid block should have $\log_2 \frac{1}{kp}$ leading zeros.

For simplicity, the value of $k$ is fixed in our current design
of \tool---adjusting $k$ on-the-fly could potentially be possible
via a view change mechanism,
but we consider it as future work.
The value of $T$ in \tool can be readily adjusted on-the-fly by individual nodes, without needing any coordination.
Because the security of \tool inherits from the security of Nakamoto consensus, using  different $T$ values in \tool has a similar effect as in Nakamoto consensus. For example, a user can use a larger $T$ for a higher security level. To simplify discussion, however, we consider some fixed $T$ value.

\Paragraph{Overview.}
Figure~\ref{fig:code} gives the pseudo-code of \tool, as run by each node.
In the main loop (Line~\ref{code:mainloop1} to \ref{code:mainloop2}) of \tool, a node receives messages from others, makes one attempt for solving the PoW (i.e., makes one query to the random oracle or hash function), and then sends out messages. The messages contain \tool blocks. Such a main loop is exactly the same as in Nakamoto consensus~\cite{kiffer18,pass2017analysis}.
Note that the block generation rate is significantly lower than the rate of queries to the random oracle. Hence, most of the time, the main loop will not have any new messages to receive/send. In an actual implementation, such sending/receiving of messages can be done in a separate thread.

The function OutputSCB() can be invoked whenever needed. It produces the current SCB,
by exactly following the description in Section~\ref{sec:ourorder}.

\Paragraph{Key data structures.}
Each \tool node maintains sets $V_0$ through $V_{k-1}$. $V_i$ initially contains the genesis block for chain $i$. During the execution, $V_i$ is a tree of blocks, containing all those blocks with a path to that genesis block. We use $W_i$ to denote the longest path (from the genesis block to some leaf) on this tree.
All blocks on $W_i$, except the last $T$ blocks, are {\em partially-confirmed}.

The Merkle tree $M$ is constructed by using the hash of the very last block on $W_i$ ($0\le i\le k-1$) as the $k$ leaves. Each node further maintains a ${\tt trailing}$ variable, which is the hash of the {\em trailing block}. The {\em trailing block} is the block with the largest ${\tt next\_rank}$ value, among all blocks in $\cup_{i=0}^{k-1}V_i$. (Note that the trailing block may or may not be in $\cup_{i=0}^{k-1}W_i$.) If there are
multiple such blocks, we let the trailing block be the one with the smallest chain id. Both $M$ and ${\tt trailing}$ should be properly updated whenever the node receives new blocks.


\Paragraph{Block generation/verification.}
A block $B$ in \tool consists of some transactions, a fresh nonce, the Merkle root, and a $B.{\tt trailing}$ field. All these fields are fed into the hash function, during mining. Section~\ref{sec:individualchain}
explains why we include the Merkle root. The following explains the $B.{\tt trailing}$ field.

%
%

As explained in Section~\ref{sec:ourorder}, regardless of which chain the new block $B$ ends up belonging to, we want $B$ to help that chain to increase its rank to
catch up with the ``longest chain'' (in terms of rank). To do so, all we need is to set $B$'s ${\tt next\_rank}$ to be large enough. A naive design is to let the creator of $B$ directly set $B$'s ${\tt next\_rank}$, based on its local $V_i$'s. Such a design enables the adversary to pick the maximum possible value\footnote{${\tt next\_rank}$ must have a finite domain in actual implementation.} for that field. Doing so exhausts the possible values for ${\tt next\_rank}$, since the ${\tt next\_rank}$ of blocks on a given chain needs to keep increasing.

This is why in \tool, each node maintains a ${\tt trailing}$ variable (i.e., the hash of the trailing block\footnote{Our trailing block is the block with the largest ${\tt next\_rank}$ in $\cup_{i=0}^{k-1}V_i$.
One could further require the trailing block to be in $\cup_{i=0}^{k-1}W_i$.
The security guarantees of \tool (in Section~\ref{sec:analysis}) and our proofs also hold for such an alternative design.}). The miner sets $B.{\tt trailing}\leftarrow {\tt trailing}$, and $B.{\tt trailing}$ (as part of $B$) is fed into the hash function. When a node $u$ receives a new block $B$, the node will verify whether it has already seen some block $C$ whose hash equals $B.{\tt trailing}$. If so, $u$ sets ${\tt next\_rank}$ of $B$ to be the same as the ${\tt next\_rank}$ of $C$. Otherwise $B$ will not be accepted. Doing so prevents the earlier attack.

Note that the adversary can still lie, and claim some arbitrary block
as its trailing block. Theorem~\ref{the:main} in
Section~\ref{sec:analysis}, however, will prove that this does {\em
not} cause any problem.  Intuitively, by doing so, the adversary is
simply refusing to help a chain to grow its rank to catch up.  But the
next honest block on the chain will enable the chain to grow its rank
properly, and to immediately catch up.


\Paragraph{Block attachment.}
Some information about a block is not available until after the block is successfully mined, and we include such information in an {\em attachment} for the block. The attachment for a block is always stored/disseminated together with the block.
We always use $\widehat{B}$ to denote an attachment for a block $B$. (Note that $\widehat{B}$ is not fed into the hash function when we compute the hash of $B$.)  Specifically, let $i$ be the last $\log_2 k$ bits in $B$'s hash. $\widehat{B}$ will include leaf $i$ of the Merkle tree, and the corresponding Merkle proof.
For convenience, $\widehat{B}$ also contains the hash of $B$, in its $\widehat{B}.{\tt hash}$ field.

An attachment $\widehat{B}$ further contains a ${\tt rank}$ value and a ${\tt next\_rank}$ value, for the block $B$. (The block $B$ itself actually does not have a ${\tt rank}$ or ${\tt next\_rank}$ field.) When a node receives a new block $B$ and its attachment $\widehat{B}$, the node independently computes the proper values for $\widehat{B}.{\tt rank}$ and $\widehat{B}.{\tt next\_rank}$ based on its local information, and use those values (Line~\ref{code:rankstart} to \ref{code:rankend} in Figure~\ref{fig:code}).
Lemma~\ref{lemma:merkle} will prove that except with an exponentially small probability (i.e., excluding hash collisions and so on), for any block $B$, all honest nodes will assign {\em exactly the same value} to $\widehat{B}.{\tt rank}$ ($\widehat{B}.{\tt next\_rank}$).
Finally, a genesis block $B$ always has $\widehat{B}.{\tt rank}=0$ and $\widehat{B}.{\tt next\_rank}=1$.

\section{Security Guarantees of \tool}
\label{sec:analysis}

Our analysis results presented in this section hold under {\em all
  possible strategies} of the adversary.

\subsection{Overview of Guarantees}

\Paragraph{Formal framework.}
Our formal framework directly follows several prior works (e.g., \cite{kiffer18,pass2017analysis}). All executions we consider are of polynomial length with respect to the security parameter $\lambda$.
We model hash functions as random oracles.
The execution of the system comprises a sequence of {\em ticks}, where a {\em tick} is the amount of time needed to do a single proof-of-work query to the random oracle by an honest node. Hence in each tick, each honest node does one such query, while the adversary does up to $fn$ such queries. We allow these $fn$ queries to be done sequentially, which only makes our results stronger. Define $\Delta$ (e.g., $2\times 10^{12}$) to be $\delta$ (e.g., $2$ seconds) divided by the duration of a tick (e.g., $10^{-12}$ second) --- hence a message sent by an honest node will be received by all other honest nodes within $\Delta$ ticks.
A block is an {\em honest block} if it is generated by some honest node, otherwise it is a {\em malicious block}. For two sequences $S_1$ and $S_2$, $S_2$ is a {\em prefix} of $S_1$ iff $S_1$ is the concatenation of $S_2$ and some sequence $S_3$. Here $S_3$ may be empty --- hence $S_1$ is also a prefix of itself.
Finally, recall from Section~\ref{sec:full} the definitions of
$(p, \lambda, T)$-Nakamoto
and $(k, p, \lambda, T)$-\tool.

\Paragraph{Main theorem on \tool.}
We will eventually prove the following:
\begin{theorem}
\label{the:main}
Consider any given constant $f<\frac{1}{2}$. Then there exists some positive constant $c$ such that for all $p\le \frac{1}{c \Delta n}$ and all $k\ge 1$, the $(k, p, \lambda, T)$-\tool protocol satisfies all the following properties, with probability at least $1- k\cdot exp(-\Omega(\lambda)) - k\cdot exp(-\Omega(T))$:
\begin{itemize}
\item {\bf (growth)} On any honest node, the length of each of the $k$ chains increases by at least $T$ blocks every $\frac{2T}{pn}$ ticks.
\item {\bf (quality)} On any honest node and at any time, every $T$ consecutive blocks on any of the $k$ chains contain at least $\frac{1-2f}{1-f}T$ honest blocks.
\item {\bf (consistency)} Consider the SCB $S_1$ on any node $u_1$ at any time $t_1$, and the SCB $S_2$ on any node $u_2$ at any time $t_2$.\footnote{Here $u_1$ ($t_1$) may or may not equal $u_2$ ($t_2$).} Then either $S_1$ is a prefix of $S_2$ or $S_2$ is a prefix of $S_1$. Furthermore, if ($u_1=u_2$ and $t_1 < t_2$) or
    ($u_1\ne u_2$ and $t_1 +\Delta < t_2$), then $S_1$ is a prefix of $S_2$.
\item {\bf (quality-growth)}
For all integer $\gamma\ge 1$, the following property holds after the very first $\frac{2T}{pn}$ ticks of the execution: On any honest node, in every
    $(\gamma +2)\cdot \frac{2T}{pn}+2\Delta$ ticks, at least
    $\gamma\cdot k \cdot  \frac{1-2f}{1-f}T$ honest blocks are newly added to SCB.
\end{itemize}
\end{theorem}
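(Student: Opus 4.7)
The plan is to exploit the modular structure of \tool and reduce the four properties to the corresponding guarantees for a single Nakamoto chain via Lemma~\ref{lemma:reduction} and Theorem~\ref{the:blackbox}. The constant $c$ is inherited directly from Theorem~\ref{the:blackbox}; the $k\cdot exp(-\Omega(\lambda))$ failure term absorbs hash-collision and Merkle-forgery events (which underpin Lemma~\ref{lemma:merkle}, ensuring every honest node assigns the same $({\tt rank},{\tt next\_rank})$ to every block), and the $k\cdot exp(-\Omega(T))$ term absorbs a per-chain Nakamoto failure, union-bounded over the $k$ chains.

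First I would discharge \textbf{(growth)} and \textbf{(quality)}. By Lemma~\ref{lemma:reduction}, events restricted to any fixed chain $i$ are distributed (up to negligible error) as a single $(p,\lambda,T)$-Nakamoto execution, so Theorem~\ref{the:blackbox} applies chain-by-chain and a union bound over the $k$ chains yields both properties with the claimed probability. No new combinatorial argument is needed here.

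The \textbf{(consistency)} argument is more delicate; I would proceed in three steps. Step (a): Lemma~\ref{lemma:merkle} makes $\widehat{B}.{\tt rank}$ and $\widehat{B}.{\tt next\_rank}$ functions of $B$ alone, so the sort key used by OutputSCB() is globally consistent. Step (b): Lemma~\ref{lemma:carryover} shows that a single honest node's local ${\tt confirm\_bar}$ is non-decreasing over time, and that once a block is declared fully-confirmed, no partially-confirmed block of smaller ${\tt rank}$ can subsequently appear on any chain; this gives the same-node prefix claim. Step (c): for two distinct nodes with $t_1+\Delta<t_2$, every partially-confirmed block contributing to $S_1$ has reached $u_2$ by time $t_2$, and the chain-prefix safety inherited from Theorem~\ref{the:blackbox} (applied per chain) prevents divergence on the relevant prefixes, so $u_2$'s ${\tt confirm\_bar}$ dominates $u_1$'s and the prefix relation follows. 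The main subtlety is that the adversary can advertise an arbitrary ${\tt trailing}$ inside a malicious block $B$; since $B.{\tt trailing}$ is hashed into $B$, however, every honest node resolves it identically, and a dishonest choice merely under-advances the rank of $B$'s chain---an under-advance the next honest block on that chain (guaranteed by (quality)) immediately corrects, without ever invalidating a previously assigned rank.

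Finally, for \textbf{(quality-growth)}, I would compose the three preceding properties. After the initial $\frac{2T}{pn}$ ticks every chain has at least $T$ partially-confirmed blocks so ${\tt confirm\_bar}$ is well-defined. By (growth), within any window of $\gamma\cdot \frac{2T}{pn}$ ticks each chain gains at least $\gamma T$ new partially-confirmed blocks; two additional $\frac{2T}{pn}$ buffers absorb the partial-confirmation cutoff on each end, and $2\Delta$ absorbs cross-node propagation, matching the window length $(\gamma+2)\frac{2T}{pn}+2\Delta$. Hence at least $\gamma T$ blocks per chain clear ${\tt confirm\_bar}$ inside the window, yielding $\gamma\cdot k\cdot T$ new fully-confirmed blocks; applying (quality) in groups of $T$ consecutive blocks per chain lower-bounds the honest fraction by $\frac{1-2f}{1-f}$, giving the stated $\gamma\cdot k\cdot \frac{1-2f}{1-f}T$ bound. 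I expect the hardest step to be step (c) of consistency---carefully propagating the per-chain safety under network delay while simultaneously handling the adversarial ${\tt trailing}$ pointers---followed by the bookkeeping in (quality-growth) to make the buffer constants line up precisely with $(\gamma+2)\frac{2T}{pn}+2\Delta$.
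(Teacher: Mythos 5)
Your proposal is correct and follows essentially the same route as the paper: set $c$ from Theorem~\ref{the:blackbox}, apply Lemma~\ref{lemma:reduction} plus Theorem~\ref{the:blackbox} chain-by-chain with a union bound over the $k$ chains to get \textbf{growth} and \textbf{quality}, and then invoke the rank/${\tt confirm\_bar}$ machinery (Lemma~\ref{lemma:merkle} feeding into Lemma~\ref{lemma:carryover}) for \textbf{consistency} and \textbf{quality-growth}. Your sketches of the consistency and quality-growth arguments match the content of the paper's Lemmas~\ref{lemma:mainconsistency} and~\ref{lemma:mainqg}, which together constitute Lemma~\ref{lemma:carryover}.
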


\Paragraph{Values of $c$, $\lambda$, and $T$.}
The value of $c$ in Theorem~\ref{the:main} will be {\em exactly the same} as the $c$ in Theorem~\ref{the:blackbox} next.
If we want the properties in Theorem~\ref{the:main} to hold with probability $1-\epsilon$, then both $\lambda$ and $T$ should be $\Theta(\log\frac{1}{\epsilon} + \log k)$. The value of $\epsilon$ needed by a real application (e.g., a cryptocurrency system) is typically orders of magnitude smaller than $\frac{1}{k}$ --- hence $\lambda$ and $T$ are usually just $\Theta(\log\frac{1}{\epsilon})$.

\Paragraph{Four properties.}
The {\bf growth} and {\bf quality} in Theorem~\ref{the:main} are about the individual component chains in \tool. For {\bf consistency}, considering individual chains obviously is not sufficient. Hence, Theorem~\ref{the:main} proves {\bf consistency}\footnote{Different prior works~\cite{garay2015bitcoin,pass2017analysis} define consistency slightly differently. Our definition here is either equivalent or stronger than those in the prior works.} for the SCB, which is the final total order of the fully-confirmed blocks.
Theorem~\ref{the:main} also proves the {\bf quality-growth} of SCB, showing that SCB will incorporate more honest blocks at a certain rate. Ultimately,
{\bf consistency} corresponds to the {\em safety} of \tool, while {\bf quality-growth} captures the {\em liveness} of \tool.

\subsection{Existing Result as a Building Block}

Our proof later will invoke the following theorem from \cite{pass2017analysis} on Nakamoto consensus.
\begin{theorem}
\label{the:blackbox} {\em (Adapted from Corollary 3 in \cite{pass2017analysis}.)}
Consider any given constant $f<\frac{1}{2}$. Then there exists some positive constant $c$ such that for all $p\le \frac{1}{c \Delta n}$, the $(p, \lambda, T)$-Nakamoto protocol satisfies all the following properties, with probability at least $1- exp(-\Omega(\lambda)) - exp(-\Omega(T))$:
\begin{itemize}
\item {\bf (growth)} On any honest node, the length of the chain increases by at least $T$ blocks every $\frac{2T}{pn}$ ticks.
\item {\bf (quality)} On any honest node and at any time, every $T$ consecutive blocks on the chain contain at least $\frac{1-2f}{1-f}T$ honest blocks.
\item {\bf (consistency)} Let $S_1$ ($S_2$) be the sequence of blocks on the chain   on any node $u_1$ ($u_2$) at any time $t_1$ ($t_2$), excluding the last $T$ blocks on the chain.
    Then either $S_1$ is a prefix of $S_2$ or $S_2$ is a prefix of $S_1$.
\end{itemize}
\end{theorem}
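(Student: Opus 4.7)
The plan is to prove Theorem \ref{the:main} via a modular reduction to Theorem \ref{the:blackbox}: apply the single-chain Nakamoto guarantee separately to each of the $k$ parallel chains, then reason about the SCB on top of these per-chain guarantees. I take the constant $c$ to be exactly the one provided by Theorem \ref{the:blackbox}. Under the random-oracle assumption each hash query produces a hash with $\log_2 \frac{1}{kp}$ leading zeros with probability $kp$, and, conditioned on that event, the last $\log_2 k$ bits (which select the chain) are uniform and independent of the leading-zero prefix; hence each query is a valid block for any fixed chain $i$ with probability exactly $p$. The Merkle-tree construction (Section \ref{sec:individualchain}) binds every valid block to exactly one predecessor on exactly one chain, so the restriction of the \tool execution to chain $i$ is identically distributed to a $(p, \lambda, T)$-Nakamoto execution with the same $n$, $\Delta$, and adversarial fraction $f$; I formalize this as Lemma \ref{lemma:reduction}. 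Applying Theorem \ref{the:blackbox} chain-by-chain and taking a union bound over $k$ chains gives the failure probability $k\cdot \exp(-\Omega(\lambda)) + k\cdot \exp(-\Omega(T))$. The \textbf{growth} and \textbf{quality} claims of Theorem \ref{the:main} are then precisely the per-chain conclusions of Theorem \ref{the:blackbox} asserted simultaneously for all $k$ chains under this union bound.

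For \textbf{consistency} I would invoke a deterministic-labeling lemma (Lemma \ref{lemma:merkle}): up to an exponentially small hash-collision probability, every honest node that processes a block $B$ computes the same $\widehat{B}.{\tt rank}$ and $\widehat{B}.{\tt next\_rank}$. Combined with the per-chain consistency from Theorem \ref{the:blackbox}, which makes the partially-confirmed prefix of each chain monotone in time on a single honest node and across honest nodes after $\Delta$ ticks, I will argue the following carry-over claim (Lemma \ref{lemma:carryover}): whenever a partially-confirmed block $B$ has $\widehat{B}.{\tt rank} < {\tt confirm\_bar}$ on some honest node at time $t_1$, the same holds at every later time on the same node and $\Delta$ ticks later on any other honest node. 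The reason is twofold: $B$ remains partially-confirmed by per-chain consistency; and ${\tt confirm\_bar} = \min_i y_i$ can only weakly increase, since each $y_i$ is the ${\tt next\_rank}$ of the last partially-confirmed block on chain $i$ and those prefixes are monotone. Since the SCB ordering by (${\tt rank}$, chain id) is fully deterministic by Lemma \ref{lemma:merkle}, the monotone growth of the set of fully-confirmed blocks yields the prefix property stated in the theorem, including both the same-node and cross-node-with-$\Delta$ cases.

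For \textbf{quality-growth} I would combine per-chain growth with the balancing effect of the ${\tt next\_rank}$ mechanism. The slack terms in $(\gamma+2)\cdot\frac{2T}{pn}+2\Delta$ are budgeted as follows: one $\frac{2T}{pn}$ window allows every chain to receive enough new honest blocks that the chains' ${\tt next\_rank}$'s equalize; the next $\gamma\cdot\frac{2T}{pn}$ window lets each chain gain at least $\gamma T$ further blocks of strictly increasing rank, pushing ${\tt confirm\_bar}$ up by at least $\gamma T$; and the final $\frac{2T}{pn}+2\Delta$ window absorbs per-chain growth slack plus propagation time $\Delta$ so the new ${\tt confirm\_bar}$ is reflected on every honest node's SCB. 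Between these two thresholds, $\gamma T$ blocks become fully-confirmed on each of the $k$ chains, and applying the per-chain \textbf{quality} guarantee to each $\gamma T$-window yields at least $\gamma\cdot k\cdot \frac{1-2f}{1-f}T$ honest blocks. The main obstacle I anticipate is this balancing step: showing that ${\tt confirm\_bar}$ cannot lag arbitrarily behind the fastest chain despite the adversary's freedom to set $B.{\tt trailing}$ arbitrarily. The key insight is that every \emph{honest} miner sets ${\tt trailing}$ to the hash of the block with globally largest ${\tt next\_rank}$ in its view, so an honest block on chain $i$ necessarily lifts that chain's last ${\tt next\_rank}$ to at least the current maximum across $\cup_j V_j$; the per-chain \textbf{quality} bound then guarantees that such catch-up honest blocks arrive on every chain quickly enough to keep ${\tt confirm\_bar}$ advancing.
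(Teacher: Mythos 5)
There is a genuine mismatch here: your proposal does not prove the statement in question at all. The statement is Theorem~\ref{the:blackbox}, the single-chain guarantee for $(p,\lambda,T)$-Nakamoto (growth, quality, and consistency of one Nakamoto chain), which the paper does not prove itself but imports verbatim as an adaptation of Corollary~3 of the Pass--Seeman--Shelat analysis. Your write-up instead sketches a proof of Theorem~\ref{the:main}, the \tool security theorem, and it does so \emph{by invoking} Theorem~\ref{the:blackbox} as a black box (``I take the constant $c$ to be exactly the one provided by Theorem~\ref{the:blackbox}'', ``Applying Theorem~\ref{the:blackbox} chain-by-chain\ldots''). As a proof of Theorem~\ref{the:blackbox} this is circular: you assume the very growth, quality, and consistency properties you are asked to establish.

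A genuine proof of Theorem~\ref{the:blackbox} would have to carry out the probabilistic analysis of Nakamoto consensus itself: a chain-growth argument counting honest proof-of-work successes in disjoint $\Delta$-separated intervals, a chain-quality argument bounding how many adversarial blocks can appear in any window of $T$ consecutive blocks given the honest/adversarial query-rate ratio determined by $f$ and the condition $p\le\frac{1}{c\Delta n}$, and a common-prefix argument based on convergence opportunities (ticks in which exactly one honest node mines and no other block arrives within $\Delta$). None of this machinery appears in your proposal. To be fair, what you have written is a reasonable outline of the paper's proof of Theorem~\ref{the:main} --- the per-chain reduction, the union bound over $k$ chains, the deterministic-rank lemma, and the ${\tt confirm\_bar}$ catch-up argument all track Lemmas~\ref{lemma:reduction}, \ref{lemma:merkle}, and \ref{lemma:carryover} closely --- but that is a different theorem, and for the one actually posed you have supplied no argument.
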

\noindent
Combing the {\bf growth} and {\bf quality} properties
immediately leads to {\bf quality-growth} for $(p, \lambda, T)$-Nakamoto:
\begin{itemize}
\item {\bf (quality-growth)} Let $S$ be the sequence of blocks on the chain on any given honest node, excluding the last $T$ blocks. Then after the very first $\frac{2T}{pn}$ ticks of the execution, in every $\frac{2T}{pn}$ ticks, at least $\frac{1-2f}{1-f}T$ honest blocks are newly added to $S$.
\end{itemize}


\subsection{Proof for Theorem~\ref{the:main}}

\Paragraph{Overview of proof for Theorem~\ref{the:main}.}
Our first step is to obtain a {\em reduction} from $(p, \lambda, T)$-Nakamoto to $(k, p, \lambda, T)$-\tool. Consider any given adversary $\mathcal{A}$ for $(k, p, \lambda, T)$-\tool, and any given chain $i$ ($0\le i\le k-1$) in the execution of $(k, p, \lambda, T)$-\tool against $\mathcal{A}$. Our reduction step will show that there exists some adversary $\mathcal{A}'$ for $(p, \lambda, T)$-Nakamoto with the following property: Except some exponentially small probability and after proper mapping from blocks in $(k, p, \lambda, T)$-\tool to blocks in $(p, \lambda, T)$-Nakamoto, the behavior of chain $i$ in the execution of $(k, p, \lambda, T)$-\tool against $\mathcal{A}$ follows {\em exactly the same distribution} as the behavior of the (single) chain in the execution of $(p, \lambda, T)$-Nakamoto against $\mathcal{A}'$. Such a reduction implies that existing properties on $(p, \lambda, T)$-Nakamoto  directly carry over to each individual chain in $(k, p, \lambda, T)$-\tool.

Our second step is to show that conditioned upon all the existing properties (in Theorem~\ref{the:blackbox}) on $(p, \lambda, T)$-Nakamoto holding for each individual chain in $(k, p, \lambda, T)$-\tool, the SCB generated in \tool must satisfy the properties in Theorem~\ref{the:main}, except with some exponentially small probability.
The reasoning in this step will center around the ${\tt rank}$ and ${\tt next\_rank}$ values of the blocks.


\Paragraph{Formal concepts needed for reduction.}
Consider any (black-box and potentially randomized) adversary $\mathcal{A}$ for \tool. Define $\tt{EXEC}(\mbox{\tool}, k,p, \lambda, T, \mathcal{A})$ to be the random variable denoting the joint states of all the honest nodes and the adversary throughout (i.e., at every tick) the entire execution resulted from running $(k,p, \lambda, T)$-\tool against $\mathcal{A}$. Define random variable ${\tt Chainview}_i(\tt{EXEC}(\mbox{\tool}, k,p, \lambda, T, \mathcal{A}))$ to be the joint state of chain $i$ on every honest node throughout this execution. Recall that in \tool, a node maintains $k$ sets of blocks, $V_0$ through $V_{k-1}$, and chain $i$ corresponds to the longest path in $V_i$.
One could imagine that for each $(u,t)$ pair, ${\tt Chainview}_i()$ contains a component describing chain $i$ on the honest node $u$ at tick $t$.

We similarly define $\tt{EXEC}(\mbox{Nakamoto}, p, \lambda, T, \mathcal{A}')$ for the execution resulted from running $(p, \lambda, T)$-Nakamoto against adversary $\mathcal{A}'$. Also, we similarly define random variable  ${\tt Chainview}(\tt{EXEC}(\mbox{Nakamoto}, p, \lambda, T, \mathcal{A}'))$ to be the joint
state of the (single) chain
on every  honest node throughout this execution.

For two random variables $X$ and $Y$ over some finite domain $Z$, define their {\em variation distance} to be $||X-Y|| = 0.5\sum_{z\in Z}|\Pr[X=z] - \Pr[Y=z]|$. If $X$ and $Y$ are both parameterized by $\lambda$, we say that $X(\lambda)$ is {\em strongly statistically close}
to $Y(\lambda)$ iff $||X-Y|| = exp(-\Omega(\lambda))$.

\Paragraph{Our reduction lemma.}
The following is our reduction lemma:
\begin{lemma}
\label{lemma:reduction}
Consider any given $i$ where $0\le i\le k-1$, and any given adversary $\mathcal{A}$ for $(k,p, \lambda, T)$-\tool. There exists some adversary $\mathcal{A}'_i$ for $(p, \lambda, T)$-Nakamoto such that
the following two random variables are strongly statistically close:
\begin{itemize}
\item ${\tt Chainview}_i(\tt{EXEC}(\mbox{\tool}, k,p, \lambda, T, \mathcal{A}))$
\item $\sigma_i^\tau({\tt Chainview}(\tt{EXEC}(\mbox{Nakamoto}, p, \lambda, T, \mathcal{A}'_i))$
\end{itemize}
Here $\tau$ is the randomness in $\tt{EXEC}(\mbox{Nakamoto}, p, \lambda, T, \mathcal{A}'_i)$. The random variable $\sigma_i^\tau({\tt Chainview}())$ is the same as ${\tt Chainview}()$, except that we replace each block $B'$ in ${\tt Chainview}()$ with another block $\sigma_i^\tau(B')$. The mapping $\sigma_i^\tau()$ is some one-to-one mapping from each block $B'$ among all the blocks
on all the nodes in $(p, \lambda, T)$-Nakamoto
to some block $\sigma_i^\tau(B')$ on all the nodes in $(k,p, \lambda, T)$-\tool. The mapping $\sigma_i^\tau()$ guarantees that i) $B'$ is an honest block iff $\sigma_i^\tau(B')$ is an honest block, and ii) $B'$ extends from $A'$ iff $\sigma_i^\tau(B')$ extends from  $\sigma_i^\tau(A')$.
\end{lemma}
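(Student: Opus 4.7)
My plan is to prove this reduction by constructing $\mathcal{A}'_i$ as an adversary that internally simulates the \emph{entire} $(k,p,\lambda,T)$-\tool execution against $\mathcal{A}$, and then extracts the chain-$i$ events to play in the real $(p,\lambda,T)$-Nakamoto instance. The critical probabilistic observation that makes the simulation feasible is this: under the random-oracle model, a single hash query in \tool produces a valid block for chain $i$ with probability exactly $p$ (the query must produce $\log_2\frac{1}{kp}$ leading zeros \emph{and} have its last $\log_2 k$ bits equal to $i$), which is precisely the success probability of a hash query in $(p,\lambda,T)$-Nakamoto. Thus each query, honest or adversarial, can be mapped one-to-one across the two protocols in a distributionally faithful way.

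Concretely, I would have $\mathcal{A}'_i$ maintain a faithful copy of the full \tool world (all sets $V_0,\ldots,V_{k-1}$ on every honest node, the global message queue respecting the $\Delta$-delay, the adversary's internal state for $\mathcal{A}$, etc.). In each tick, every honest node in the simulated \tool does one hash query and the adversary $\mathcal{A}$ does up to $fn$ queries; $\mathcal{A}'_i$ handles these by sampling fresh random hashes on its own and responding to $\mathcal{A}$ accordingly, thereby driving the simulation forward. Whenever a simulated query would produce a valid \tool block \emph{for chain $i$}, $\mathcal{A}'_i$ issues a corresponding query to its real Nakamoto oracle: for an honest simulated node $u$ at tick $t$, $\mathcal{A}'_i$ needs the real honest node $u$ in the Nakamoto execution to mine successfully at that tick, and for an adversarial simulated query, $\mathcal{A}'_i$ itself submits a Nakamoto query. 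Via a standard coupling, the randomness $\tau$ of the Nakamoto execution can be used both to drive the real chain and to determine the outcomes of the corresponding simulated \tool events, so that the two are perfectly coupled. The mapping $\sigma_i^\tau$ then sends each Nakamoto block $B'$ to the unique \tool block in chain $i$ of the simulated world that it was coupled with; honesty is preserved because honest/adversarial origin was preserved by the simulation, and ``extends from'' is preserved because the parent pointer on chain $i$ in \tool (namely the Merkle leaf $\widehat{B}.{\tt leaf}$) is chosen by the miner in exactly the same way that a Nakamoto miner chooses its parent.

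The main obstacle, and where I would spend the most care, is arguing that $\mathcal{A}'_i$ really is a legitimate Nakamoto adversary (not just a simulation artifact) and that the induced distribution matches. Three things must be checked: first, that $\mathcal{A}'_i$ uses at most $fn$ oracle queries per tick, which holds because only a subset of $\mathcal{A}$'s $fn$ per-tick queries trigger a Nakamoto query (one per simulated chain-$i$ event, and only adversarial simulated queries trigger adversarial Nakamoto queries); second, that the simulated \tool view $\mathcal{A}$ receives is \emph{genuine}, meaning the hashes given back to $\mathcal{A}$ for non-chain-$i$ events are sampled as true random-oracle outputs, which is immediate since $\mathcal{A}'_i$ can sample them freely; third, that the ``extra'' content a \tool block carries (the Merkle root over all $k$ chains and the ${\tt trailing}$ field) does not give the adversary any capability unavailable to a Nakamoto adversary, which follows because these are merely payload bits hashed into the block, no different in principle from the free choice of payload a Nakamoto miner already has.

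Finally, I would argue strong statistical closeness by identifying the single source of slack: hash collisions and malformed-Merkle events within the $\mathrm{poly}(\lambda)$-length execution, which occur only with probability $\mathrm{exp}(-\Omega(\lambda))$ by a union bound over at most $\mathrm{poly}(\lambda)$ queries against a $\lambda$-bit random oracle. Conditioned on no such bad event, the coupling described above makes the two chain-views bit-for-bit identical under $\sigma_i^\tau$, so the variation distance is bounded by the probability of a bad event, giving $\|X - Y\| = \mathrm{exp}(-\Omega(\lambda))$ as required. A short Lemma~\ref{lemma:merkle}-style argument (that every honest node assigns the same $\widehat{B}.{\tt rank}$ and $\widehat{B}.{\tt next\_rank}$) is also implicit here, but only matters for the second-step analysis downstream, not for the reduction itself.
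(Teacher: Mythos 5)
Your proposal is correct and follows essentially the same route as the paper's proof: $\mathcal{A}'_i$ is a middlebox that internally simulates the entire \tool execution against $\mathcal{A}$, couples chain-$i$ mining events to the real $(p,\lambda,T)$-Nakamoto execution via the random oracle (tunneling the adversary's PoW queries to the real oracle and reusing the real honest nodes' hash outcomes for the simulated honest miners), and charges the residual variation distance to hash-collision and spurious-verification bad events of probability $exp(-\Omega(\lambda))$. The only wording to tighten is the direction of the honest-node coupling: $\mathcal{A}'_i$ cannot ``need'' a real Nakamoto node to mine successfully at a given tick; rather, as you correctly note afterwards, the real node's mining outcome under $\tau$ is observed and then imposed on the simulated \tool node's oracle response (returning the real hash, which has the right leading-zero and chain-index bit pattern, on success, and a uniform value conditioned on not being a chain-$i$ success on failure), which is exactly the paper's construction.
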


The crux of proving this lemma is to construct the adversary $\mathcal{A}'_i$. In our proof, $\mathcal{A}'_i$ simulates a certain execution of \tool against $\mathcal{A}$. More precisely, $\mathcal{A}'_i$ simulates all the \tool nodes, as well as the adversary $\mathcal{A}$ in a black-box fashion. The adversary $\mathcal{A}'_i$ simultaneously interacts with the (real) nodes running $(p, \lambda, T)$-Nakamoto, while ensuring that the simulated \tool execution and the real Nakamoto execution are properly ``coupled''. We defer the complete proof of this lemma to Appendix~\ref{app:reductionformal} through \ref{app:reductionproof}.
To be fully rigorous,
the complete proof needs additional formalism and also needs to fully specify the $(p, \lambda, T)$-Nakamoto protocol by pseudo-code.


\Paragraph{From individual chains to SCB.}
The following lemma (proof in Appendix~\ref{app:carryover}) establishes the connection from the individual chains in \tool to the SCB in \tool:
\begin{lemma}
\label{lemma:carryover}
If the three properties in Theorem~\ref{the:blackbox} hold for each of the $k$ chains in $(k,p, \lambda, T)$-\tool, then with probability at least $1-exp(-\Omega(\lambda))$, the SCB in \tool satisfies the {\bf consistency} and {\bf quality-growth} properties in Theorem~\ref{the:main}.
\end{lemma}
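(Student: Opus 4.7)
The plan is to lift the per-chain properties assumed in the hypothesis to properties of SCB in a mostly deterministic reduction, with the stated $\exp(-\Omega(\lambda))$ failure term absorbed entirely by Lemma~\ref{lemma:merkle}. First I would invoke Lemma~\ref{lemma:merkle} to pin down canonical values of $\widehat{B}.{\tt rank}$ and $\widehat{B}.{\tt next\_rank}$ for each block $B$, so that the output of ${\tt OutputSCB}$ is determined per block rather than per (node, time). From Lines~\ref{code:rankstart}--\ref{code:rankend} of Figure~\ref{fig:code} one reads off that along any single chain ${\tt rank}$ and ${\tt next\_rank}$ are strictly increasing. Combined with the chain-wise \textbf{consistency} assumed in the hypothesis, this makes the partially-confirmed prefix of each chain monotone in time at the same honest node and grows across honest nodes within $\Delta$ ticks of delivery; hence each $y_i$, and therefore ${\tt confirm\_bar}=\min_i y_i$, is monotone in the same sense.

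For \textbf{consistency} of SCB I would prove the following central sublemma: \emph{if $C$ is partially-confirmed on chain $i$ at $(u_1,t_1)$ and ${\tt confirm\_bar}(u_2,t_2)>\widehat{C}.{\tt rank}$, then $C$ is partially-confirmed on chain $i$ at $(u_2,t_2)$ as well.} Chain-wise consistency makes the two partially-confirmed prefixes of chain $i$ prefix-comparable; the only non-trivial case is when the prefix at $(u_2,t_2)$ strictly precedes $C$, in which case its last block $B^{*}$ lies strictly before $C$ on chain $i$ and strict monotonicity of ${\tt next\_rank}$ forces $\widehat{B^{*}}.{\tt next\_rank}\le \widehat{C}.{\tt rank}$, i.e.\ $y_i(u_2,t_2)\le\widehat{C}.{\tt rank}$, contradicting ${\tt confirm\_bar}(u_2,t_2)>\widehat{C}.{\tt rank}$. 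Assuming without loss of generality ${\tt confirm\_bar}(u_1,t_1)\le{\tt confirm\_bar}(u_2,t_2)$, the sublemma places every block of $S_1$ on the correct chain at $(u_2,t_2)$ with rank below both bars, so every block of $S_1$ lies in $S_2$; since Lemma~\ref{lemma:merkle} fixes the labels, the (rank, chain-id)-sorted orderings agree and $S_1$ is a prefix of $S_2$. The strict-prefix cases---same node at a later time, or distinct nodes with $t_1+\Delta<t_2$---follow from the monotonicity of ${\tt confirm\_bar}$ already established in those regimes.

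For \textbf{quality-growth}, fix an honest node $u$ and a window of length $\tau=(\gamma+2)\cdot \frac{2T}{pn}+2\Delta$. Applying the chain-wise \textbf{growth} hypothesis to each of the $k$ chains produces at least $(\gamma+2)T$ new partially-confirmed blocks per chain at $u$; strict increase of ${\tt rank}$ then forces $y_i$, and thus ${\tt confirm\_bar}$, to advance by at least $(\gamma+2)T$. Enumerating the new partially-confirmed blocks of chain $i$ in order as positions $1,\dots,m_i$ with $m_i\ge (\gamma+2)T$ and ranks $r_i^{(1)}<\cdots<r_i^{(m_i)}=r_i^1$, strict increase yields $r_i^{(j)}\le r_i^1-(m_i-j)$, so the first $\gamma T$ new blocks on each chain have rank at most $r_i^1-2T$. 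To conclude that these blocks are fully-confirmed at $t+\tau$ I need the overhang bound $y_i^1-{\tt confirm\_bar}(u,t+\tau)\le 2T$ on every chain, and establishing this is the main obstacle. The plan is to exploit the chain $j^{*}$ realizing $y_{j^{*}}={\tt confirm\_bar}(u,t+\tau)$, where the chain-wise \textbf{quality} hypothesis guarantees an honest block $H^{*}$ among the last $T$ partially-confirmed blocks of chain $j^{*}$; because $H^{*}$ is honest, its ${\tt next\_rank}$ equals its creator's local maximum of ${\tt next\_rank}$ at creation time, and is therefore at most ${\tt confirm\_bar}(u,t+\tau)$; the synchronous propagation assumption, the $2\Delta$ slack built into the window, and the mining-rate bound $p\le 1/(c\Delta n)$ together cap how far $u$'s own maximum of ${\tt next\_rank}$ at $t+\tau$ can exceed that of $H^{*}$'s creator, yielding overhang at most $2T$. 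With the overhang bound in hand, applying \textbf{quality} to the first $\gamma T$ new partially-confirmed blocks on each chain produces at least $\gamma\cdot\frac{1-2f}{1-f}T$ honest fully-confirmed blocks per chain, summing to $\gamma k \cdot\frac{1-2f}{1-f}T$ across the $k$ chains, as required.
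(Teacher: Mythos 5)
Your treatment of \textbf{consistency} is sound and is essentially the paper's own argument in different packaging: your symmetric sublemma (applied in both directions between $(u_1,t_1)$ and $(u_2,t_2)$) is equivalent to the paper's two claims that $G_1(i)$ is a prefix of $G_2(i)$ and that every block of $G_2(i)\setminus G_1(i)$ has ${\tt rank}\ge x_1$, and the monotonicity of ${\tt confirm\_bar}$ handles the strict-prefix cases the same way. Note only that to conclude ``$S_1$ is a prefix of $S_2$'' you must explicitly invoke the sublemma in the reverse direction as well (nothing of rank below $x_1$ sits in $S_2\setminus S_1$); containment plus identical sort keys alone gives a subsequence, not a prefix.

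The \textbf{quality-growth} half has a genuine gap, and it is exactly the step you flag as ``the main obstacle.'' The uniform overhang bound $y_i-{\tt confirm\_bar}(u,t+\tau)\le 2T$ is not derivable from the hypotheses and is false in general. The reason is that ${\tt next\_rank}$ is not a per-chain quantity: via the trailing-block mechanism it tracks the \emph{global} maximum of ${\tt next\_rank}$ over all $k$ chains, and that maximum can be incremented by every block produced on any chain (plus adversarially injected blocks). So the gap between $y_i$ on a chain whose recent blocks inherited a large global maximum and $y_{j^*}=\min_j y_j$ on a lagging chain scales with total block production across all $k$ chains over the relevant interval, not with the per-chain quantity $T$; Theorem~\ref{the:blackbox} gives only a \emph{lower} bound on chain growth and no upper bound on how far ranks can run ahead. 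Your sketch for proving the bound also contains a false step: an honest block's ${\tt next\_rank}$ equals (roughly) the global maximum in its creator's view, which is in general far \emph{above} ${\tt confirm\_bar}$, so the subsequent ``cap how far $u$'s maximum can exceed that of $H^*$'s creator'' cannot yield $2T$. The paper's proof of Lemma~\ref{lemma:mainqg} is structured to avoid any such bound: it fixes $x$ to be the largest ${\tt next\_rank}$ among the specific blocks $\alpha_i$ to be confirmed (all of which have ${\tt rank}<x$), waits $\Delta$ ticks so that every honest node has seen them (hence every honest miner's trailing block thereafter has ${\tt next\_rank}\ge x$), and then uses \textbf{growth} plus \textbf{quality} to place, within an additional $\frac{4T}{pn}+\Delta$ ticks, at least one newly partially-confirmed \emph{honest} block with ${\tt next\_rank}\ge x$ on every chain, which pushes ${\tt confirm\_bar}$ up to $x$ regardless of how far the individual $y_i$ overshoot. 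You should restructure your argument along those lines rather than trying to control the overhang.
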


\Paragraph{Final proof.}
Using all the lemmas, we can now prove Theorem~\ref{the:main}:
\begin{proof} (for Theorem~\ref{the:main})
We set the constant $c$ in Theorem~\ref{the:main} to be the same as the $c$ in Theorem~\ref{the:blackbox}. For any given $i$ where $0\le i\le k-1$, Lemma~\ref{lemma:reduction} and Theorem~\ref{the:blackbox} tell us that for chain $i$  in $(k,p, \lambda, T)$-\tool, with probability at least $1- exp(-\Omega(\lambda)) - exp(-\Omega(T)) - exp(-\Omega(\lambda))$, the three properties in Theorem~\ref{the:blackbox} hold for that chain. Hence with probability at least $1- k\cdot exp(-\Omega(\lambda)) - k\cdot exp(-\Omega(T))$, the properties in Theorem~\ref{the:blackbox} hold for {\em all} $k$ chains in $(k,p, \lambda, T)$-\tool.
The {\bf growth} and {\bf quality} properties in Theorem~\ref{the:main} then directly follow. Applying Lemma~\ref{lemma:carryover} further leads to the {\bf consistency} and {\bf quality-growth} properties in Theorem~\ref{the:main}.
\end{proof}


\subsection{Discussion and Comparison}



\Paragraph{Plug-in alternative results.}
Our analysis invokes the results in \cite{pass2017analysis}.
The analysis in \cite{pass2017analysis} is just one of the many works~\cite{garay2015bitcoin,pow-difficulty,kiayias15tradeoff,kiffer18,pass2017analysis} that analyze Nakamoto-style protocols.
A highlight of our proof on \tool is that it invokes the existing guarantees on $(p, \lambda, T)$-Nakamoto as {\em black-box}. Hence alternative results on $(p, \lambda, T)$-Nakamoto directly translates to alternative results on \tool.

Specifically, Theorem~\ref{the:blackbox}
(adopted from \cite{pass2017analysis}) has the following three {\em quantitative measures}:
\begin{itemize}
\item The value of $c$, where $\frac{1}{c\Delta n}$ is the upper limit on $p$.
\item The value of $x = \frac{2T}{pn}$ ticks (i.e., the {\em growth rate}), which is the time needed for the chain length to grow by $T$ blocks.
\item The value of $y = \frac{1-2f}{1-f}T$ (i.e., the {\em quality rate}), which is the number of honest blocks among every $T$ consecutive blocks on the chain.
\end{itemize}
Other analyses~\cite{garay2015bitcoin,pow-difficulty,kiayias15tradeoff,kiffer18} have obtained alternative results on $x$ and $y$ values for $(p, \lambda, T)$-Nakamoto. They have also derived various sufficient conditions\footnote{For example, the sufficient condition derived in \cite{pass2017analysis} is that \linebreak $\alpha(1-2(\Delta+1)\alpha)\ge (1+\eta)\beta$ for some positive constant $\eta$.} for consistency in $(p, \lambda, T)$-Nakamoto, which all ultimately translate to requirement on the value of $c$.

We can directly plug in alternative $c$, $x$, and $y$ values from alternative analyses on $(p, \lambda, T)$-Nakamoto to obtain alternative results on \tool. If we do so, then the value of $c$ in Theorem~\ref{the:main} will be {\em exactly the same} as the given $c$. The {\bf consistency} property in Theorem~\ref{the:main} will remain unchanged. The remaining properties in Theorem~\ref{the:main} will become:
\begin{itemize}
\item {\bf (growth)} The length of each of the $k$ chains on each honest node will increase by at least $T$ blocks every $x$ ticks.
\item {\bf (quality)} On any honest node and at any time, every $T$ consecutive blocks on any of the $k$ chains must contain at least $y$ honest blocks.
\item {\bf (quality-growth)}
For all integer $\gamma\ge 1$, after the very first $x$ ticks of the execution, on any honest node in every
    $(\gamma +2)\cdot x+2\Delta$ ticks, at least
    $\gamma\cdot k \cdot y$ honest blocks are newly added to SCB.
\end{itemize}

%

\Paragraph{Comparing with prior results.}
With the above discussion, we can now easily compare Theorem~\ref{the:main} with any of the prior results~\cite{garay2015bitcoin,pow-difficulty,kiayias15tradeoff,kiffer18,pass2017analysis}.
Consider the result from any such prior analyses, with certain resulting $c$, $x$, and $y$ values. (This also implies that
in that particular result, the rate of {\bf quality-growth} is
$y$ new honest blocks every $x$ ticks.)
Now let us plug in such $c$, $x$, and $y$ values into Theorem~\ref{the:main}. Then \tool will provide exactly the same quantitative guarantees as that prior result, in terms of the value of $c$, the growth rate, and the quality rate. The only difference will be regarding the {\bf quality-growth} of SCB. Under the given $x$ and $y$, for all integer $\gamma\ge 1$, in \tool $\gamma\cdot k\cdot y$ honest blocks are newly added to the SCB every
$(\gamma +2)\cdot x+2\Delta$ ticks. Since $2\Delta$ is dominated by $(\gamma +2)\cdot x$, the average rate of honest blocks being added to the SCB is about $k\cdot y$ honest blocks every $x$ ticks, in the long term. Such a rate is $k$ times of the rate of {\bf quality-growth} in the corresponding prior result. This also intuitively explains why \tool increases throughput by $k$ times.\footnote{Of course, $k$ cannot be unbounded. In practice, increasing $k$ beyond a certain point will cause a non-trivial increase in $\Delta$, when the system starts to saturate the network bandwidth. Our experiments later will quantify this.}

\Paragraph{Confirmation latency.}
Finally, Theorem~\ref{the:main} also indirectly gives \tool's guarantee on transaction {\em confirmation latency} (also called {\em wait-time} in \cite{garay2015bitcoin,pass2017analysis}): Assume that we want the properties in Theorem~\ref{the:main} to hold with $1-\epsilon$ probability, and let us invoke the theorem with $p = \Theta(\frac{1}{c\Delta n})$. By {\bf quality-growth}, at least one honest block is newly added to SCB within $\Theta(T\Delta) = \Theta((\log\frac{1}{\epsilon} + \log k)\Delta)$ ticks. Now if a transaction is injected into all honest nodes continuously for $\Theta((\log\frac{1}{\epsilon} + \log k)\Delta)$ ticks, this transaction must be included in the SCB after those ticks. Hence, the confirmation latency is simply $\Theta((\log\frac{1}{\epsilon} + \log k)\Delta)$ ticks.
As explained earlier, this usually becomes $\Theta(\Delta\log\frac{1}{\epsilon})$ in practical settings. Such a confirmation latency is the same as in \cite{garay2015bitcoin,pass2017analysis}, and is fundamental in all Nakamoto-style protocols: Each new block takes $\Theta(\Delta)$ ticks, and $\Theta(\log\frac{1}{\epsilon})$ new blocks are needed for the ``confidence'' to reach $1-\epsilon$.

\section{Experimental Evaluation}
\label{sec:eval}

\Paragraph{Methodology.}
We have implemented a prototype of \tool in C++, with total around 4,700
lines of code. We use Amazon's EC2
virtual machines (or EC2 instances) for evaluation. We rent \texttt{m4.2xlarge}
instances in 14 cities around the globe\footnote{This is the maximal
number of cities with such instances.}, with each instance having
$8$ cores and $1$Gbps bandwidth.
The one-way latency between two random EC2 instances is about
$90$-$140$ms, which is consistent with AlgoRand's experiments~\cite{algorand} and with measurements in Bitcoin and Ethereum~\cite{gencer-fc18}. To avoid excessive monetary expenses on EC2, we run two sets of experiments. Our {\em macro experiments} run $12,000$ to $50,000$ \codename nodes on up to $1,000$ EC2 instances to evaluate the end performance of \tool, while our {\em micro experiments} run $1,000$ \codename nodes on $20$ EC2 instances to determine \codename internal parameters.



In all experiments, the \tool nodes form a P2P overlay by each node connecting to $8$ randomly selected peers, and per-node bandwidth is up to $20$Mbps, since we run $50$ nodes per EC2 instance. This setup is the same as in the experiments of AlgoRand~\cite{algorand} and Conflux~\cite{conflux}.
All results reported are averaged over $5$
runs, each lasting until measurements stabilize in that run.



\subsection{Choosing Block Size and Block Interval in \tool}
\label{sec:micro}


We first use micro experiments to measure the {\em block propagation delay}
({\em BPD}) for a single block (with
no parallel propagations) to reach $99\%$ of the nodes. We consider different bandwidth configurations where the per-node bandwidth ranges from $8$Mbps to $20$Mbps. We observe that the BPD for \blocksize blocks is about $1.7$-$1.9$ seconds, across all our bandwidth configurations. Similar BPD values are observed for block sizes ranging from $10$ KB to $64$ KB.




We further observe that such BPD does not significantly increase as the network size
increases: Even in our macro experiments with $50,000$ nodes,
the BPD values are still only about $3.2$ seconds for $10$-$20$ KB blocks.
This is consistent with theoretical expectations about
random graphs,\footnote{In random Erdos-Renyi graphs, roughly speaking, the average number of hops between two nodes increases only logarithmically with the number of nodes.} and the fact that BPD is proportional to the average number
of hops between two nodes.


Smaller blocks enable better decentralization, but also incur more protocol overheads. Taking all factors into account, we choose a block size of \blocksize for \tool. We then set $p$ to correspond to a block interval of about $10$ seconds on {\em each} chain. Based on Section~\ref{sec:analysis} and results in  \cite{pass2017analysis}, a block interval of $5\times$ BPD is sufficient to tolerate $f =0.43$.

\subsection{Efficient Parallel Propagation of Blocks}


At the network level, a key difference between \tool and other
high-throughput protocols (e.g.,  Algorand~\cite{algorand}) is that
with its large number of parallel chains, \tool propagates a large
number of small blocks in parallel. In contrast, protocols such as Algorand
usually propagate a single large block (e.g., of $1$ MB).

Now for \tool, the critical empirical question is whether propagating many parallel blocks will have significant negative impact on BPD, as compared to propagating a single such block. We use micro experiments to answer this critical question.
Figure~\ref{fig:20KB}(left) plots the BPD for \blocksize blocks (results for block sizes ranging from $10$-$64$KB are similar and not shown), as a function of number of parallel block propagations. For example, ``40 parallel blocks/sec'' means that we inject $40$ new blocks (of size \blocksize each) into the network per second. The figure shows that under $20$Mbps raw bandwidth, even $60$ parallel blocks per second will not
cause any substantial increase in BPD.

Figure~\ref{fig:20KB}(right) presents the same results from a different perspective --- it plots how BPD changes as the fraction of raw bandwidth used by the parallel block propagation increases. It shows that, consistently under all our bandwidth configurations, parallel block propagation can effectively utilize a rather significant fraction (about $50$\%) of the raw bandwidth, without significant negative impact on BPD. This simple yet important finding lays the empirical
foundation for parallel chain designs. In particular, we can hope
\tool to eventually achieve a throughput approaching a significant fraction of the raw network bandwidth, by using a sufficient number of parallel chains.

\begin{figure}[t]
\pgfplotsset{compat=newest,
legend style={font=\footnotesize},
label style={font=\footnotesize},
tick label style={font=\footnotesize},
title style={font=\footnotesize}}

\pgfplotsset{footnotesize,height=4cm,width=0.25\textwidth}

\begin{center}
\begin{tikzpicture}[scale=1]
\begin{axis}[
legend columns=3,
legend entries={8Mbps;,16Mbps;,20Mbps},
legend to name=fig:num_par,
xmin=0,
ymin=0,
ymax=8.2,
xlabel={\# parallel blocks/second\hspace*{2mm}},
ylabel={BPD (seconds)},
grid=major]
\addplot[
    color=blue,
    mark=square,
    ]
    coordinates {
      (0,1.7)(12.5,2.141)(16.5,2.33)(25.0,2.311)(33.0,3.838)(37.5,5.975)
    };
\addplot[
    color=green,
    mark=*,
    ]
    coordinates {
      (0,1.8)(25.0,1.961)(33.0,1.955)(50.0,2.05)(66.0,3.453)(75.0,5.559)
    };

\addplot[
    color=red,
    mark=x,
    ]
    coordinates {
      (0,1.9)(31.25,2.036)(41.25,2.097)(62.5,2.165)(82.5,3.826)(93.75,5.86)
    };
\end{axis}
\end{tikzpicture}
\begin{tikzpicture}[scale=1]
\begin{axis}[
ymin=0,
ymax=8.2,
xlabel={bandwidth utilization (\%) },
ylabel={BPD (seconds)},
grid=major]
\addplot[
    color=blue,
    mark=square,
    ]
    coordinates {
    (25,2.141)(33,2.33)(50,2.311)(66,3.838)(75,5.975)
    };
\addplot[
    color=green,
    mark=*,
    ]
    coordinates {
    (25.0,1.961)(33.0,1.955)(50.0,2.05)(66.0,3.453)(75.0,5.559)
    };

\addplot[
    color=red,
    mark=x,
    ]
    coordinates {
    (25,2.036)(33,2.097)(50,2.165)(66,3.826)(75,5.86)
    };
\end{axis}
\end{tikzpicture}
\\
\ref{fig:num_par}
\end{center}
\vspace*{-0mm}
\caption{
BDP under different per-node bandwidth configurations ($8$-$20$Mbps).
[\textbf{left}]: BPD vs. number of blocks
propagated in parallel per second.  The BPD for a
  single block (non-parallel case) is plotted as x-axis value of $0$.
  [\textbf{right}]: BPD vs. fraction of raw bandwidth utilized
  by parallel block propagation.}
  \vspace*{-2mm}
\label{fig:20KB}
\end{figure}
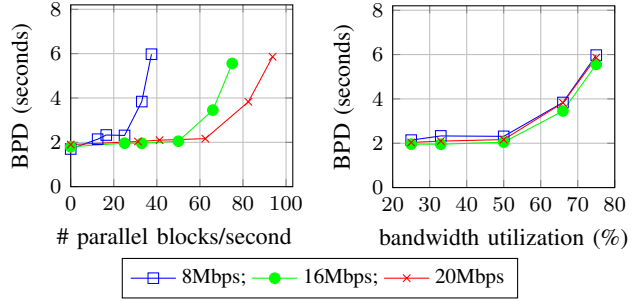


\begin{figure*}[htbp]
\begin{subfigure}{.33\textwidth}
\pgfplotsset{compat=newest,
legend style={font=\footnotesize},
label style={font=\normalsize},
tick label style={font=\footnotesize},
title style={font=\footnotesize}}
\centering\captionsetup{width=.9\linewidth}
\begin{tikzpicture}[font=\footnotesize]
\begin{axis}[
scale=0.5,
height=1.2*\axisdefaultheight,
width=\textwidth*1.7,
xtick={8,12,16,20},
ylabel=\footnotesize{txns throughput (Mbps)},
xlabel=\footnotesize{available bandwidth (Mbps)},
ybar=1pt,
tick align=inside,
ymin=0,
ymax=12,
xmin = 6,
xmax=22,
xlabel shift = -5 pt, 
ylabel shift = -5 pt, 
tick pos=left,
bar width=14pt,
nodes near coords,
every node near coord/.append style={font=\scriptsize},
]
\addplot
coordinates {(8,3.84) (12, 5.84) (16, 7.76) (20, 9.68)};
\end{axis}

\begin{axis}[
  scale=0.5,
  height=1.2*\axisdefaultheight,
  width=\textwidth*1.7,
  axis y line*=right,
  axis x line=none,
  ymin=0,
  ymax=3000,
  ylabel shift = -5 pt, 
  ylabel=\footnotesize{txns per second (tps)},
]
\addplot [only marks, mark=-]
coordinates {(8,960) (12, 1460) (16, 1940) (20, 2420)};

\end{axis}
\end{tikzpicture}
\vspace*{-6mm}
\caption{Transaction throughput in Mbps (left y-axis) and transactions/second (right y-axis).}
\label{fig:throughput}
\end{subfigure}
\begin{subfigure}{.33\textwidth}
\pgfplotsset{compat=newest,
legend style={font=\footnotesize},
label style={font=\normalsize},
tick label style={font=\footnotesize},
title style={font=\footnotesize}}
\centering\captionsetup{width=.9\linewidth}
\begin{tikzpicture}[trim axis right,font=\footnotesize]
\begin{axis}[
scale=0.5,
height=1.2*\axisdefaultheight,
width=\textwidth*1.7,
legend columns=2,
legend entries={exp. observed;, theoretically expected},
legend to name=fig:dec,
xtick={8,12,16,20},
ylabel=\footnotesize{decentr. factor (blocks/sec)},
ylabel shift = -5 pt, 
xlabel=\footnotesize{available bandwidth (Mbps)},
xlabel shift = -5 pt, 
ybar=1pt,
xtick align=inside, 
bar width=9pt,
xmin = 6,
xmax = 22,
ymin=0,
ymax=75,
tick pos=left,
nodes near coords,
every node near coord/.append style={font=\scriptsize},
]
\addplot [pattern=north east lines,]
coordinates {(8,24.7) (12, 37.0) (16, 49.6) (20, 61.8)};
\addplot
coordinates {(8,25) (12, 37) (16, 50) (20, 62)};
\end{axis}

\end{tikzpicture}
\\
\ref{fig:dec}
\caption{Decentralization factor.}
\label{fig:decentral}
\end{subfigure}
\begin{subfigure}{.33\textwidth}
\pgfplotsset{compat=newest,
legend style={font=\footnotesize},
label style={font=\normalsize},
tick label style={font=\footnotesize},
title style={font=\footnotesize}}
\centering\captionsetup{width=.9\linewidth}
\begin{tikzpicture}[trim axis right,font=\footnotesize]
\begin{axis}[
scale=0.5,
height=1.2*\axisdefaultheight,
width=\textwidth*1.8,
legend columns=2,
legend entries={partial-confirm;, full-confirm},
legend to name=fig:confirmT-internal,
ylabel=\footnotesize{confirmation latency (sec)},
ylabel shift = -5 pt, 
xlabel=\footnotesize{$T$},
xlabel shift = -5 pt, 
xmin=0,
ymin=0,
ymax=630,
nodes near coords,
every node near coord/.append style={font=\scriptsize},
grid=major
]
  \addplot[
    color=blue,
    mark=*,
  ]
  coordinates {(6,59.4) (10,100.2) (15,150.5) (20,200.3) (25,250.3) (30,299.9)};
  \addplot[
    color=black,
    mark=x,
  ]
  coordinates {(6,188.6) (10,250.1) (15,324.3) (20,390.2) (25,468.5) (30,556.3)};
\end{axis}
\end{tikzpicture}
\\
\ref{fig:confirmT-internal}
\caption{Confirmation latency.}
\label{fig:confirm}
\end{subfigure}
%
%
\vspace*{-0mm}
\caption{\tool's throughput, decentralization factor, and confirmation latency.}
\vspace*{-2mm}
\end{figure*}
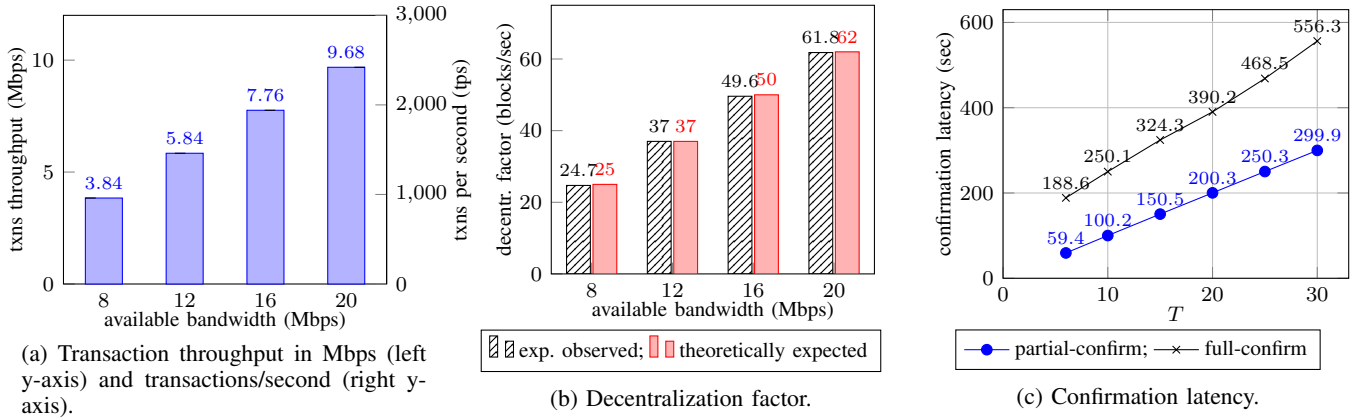

\subsection{End-to-end Performance of \tool}

We finally use macro experiments to evaluate the end-to-end performance
of \codename, in terms of its throughput, decentralization factor, and confirmation latency.
By default, all our results will be from running $12,000$ nodes on $1000$ EC2 instances, with different per-node bandwidth configurations ($8$-$20$Mbps). We have also
experimented with $50,000$ nodes on $1000$ EC2 instance, with $20$Mbps per-node bandwidth. Those results are within $1\%$ of the results under the corresponding experiment with $12,000$ node, and we do not report those separately.

We always use \blocksize blocks, with block interval being
$10$ seconds on each chain, as determined in Section~\ref{sec:micro}.
We choose $k$ according to the available
bandwidth, such that
$k\times \mbox{block\_size} / \mbox{block\_interval} \approx 0.5\times
\mbox{available\_bandwidth}$. Specifically, for $8$Mbps, $12$Mbps, $16$Mbps, and $20$Mbps per-node available bandwidth, we use $k=250$, $370$, $500$, and $620$, respectively. Since these $k$ values are not powers of $2$, we do not use the last
$\log_2 k$ bits of the block hash to decide which chain a block belongs to. Instead, let $x$ be the last $48$ bits of the block hash, and we assign the block to chain $i$ where $i = x \mod k$. Assuming the hash function is a random oracle, doing so will assign each block to a uniformly random chain, except some negligible probability.\footnote{This is not exactly uniformly random since the range of $x$ is not an exact multiple of $k$. But the impact is negligible.}
Finally, to be able to run multiple nodes on each EC2 instance, we do not have the nodes solve PoW puzzles. Instead, each node produces (``mines'') a new block after some exponentially distributed time that corresponds to the mining difficulty.




\Paragraph{Throughput.}
Figure~\ref{fig:throughput} shows that the throughput of \tool indeed
scales up roughly linearly with the available bandwidth. In fact, the throughput of \tool always reaches a rather significant fraction (about $50$\%) of the raw available network bandwidth of the system.
Under $20$Mbps available bandwidth, the throughput of
\tool is about $9.68$Mbps, or about $2,420$ transactions per second
(assuming $500$-byte average transaction size as in Bitcoin). 
As a quick comparison, \codename achieves about $550\%$
of the throughput of AlgoRand~\cite{algorand} under similar available bandwidth.
Compared to Conflux
under $20$Mbps available bandwidth~\cite{conflux}, the throughput of \codename
is about $150\%$ of the throughput of Conflux. This suggests that while explicitly focusing on simplicity, \tool still retains the high throughput property of modern blockchain designs.


\Paragraph{Decentralization factor.}
Another advantage of \codename is its {\em decentralization
factor}, in terms of the number of distinct confirmed blocks per
second. Figure~\ref{fig:decentral} shows that the decentralization
factor of \codename increases linearly with the available
bandwidth. This is expected since the number of parallel blocks (or
chains) increases with the available bandwidth. In particular,
\codename achieves a decentralization factor of about $61.8$ under $20$Mbps
available bandwidth. This is at least about $20$x higher
than those reported in experiments on previous permissionless protocols, among which Omniledger~\cite{omniledger} reported the best decentralization factor of about $3.1$ (i.e., $25$ blocks in $8.1$ seconds).




\Paragraph{Confirmation latency.}
Let $T$ be the number of blocks that we remove from the end of the
chain in order to obtain partially-confirmed blocks in each individual
chain in \tool. For example, Bitcoin and Ethereum use $T=6$ and
$T=10$ to $15$, respectively.
For comparable security, our analysis in Section~\ref{sec:analysis} suggests using a
$T$ that is $\Theta(\log k)$ larger. Given that our $k$ is no larger than $2^{14}$, we use $T=20$ to $30$ in \codename. Note that $T$ has impact only on confirmation latency, and has no impact on the throughput or decentralization factor of \tool.

Figure~\ref{fig:confirm} plots the average time for
a block to become partially-confirmed and fully-confirmed on all nodes, under $20$Mbps bandwidth configuration. It
shows that a block takes
about $1$-$5$ minutes to become partially-confirmed, and then about
another $2$-$4$ minutes to become fully-confirmed.
As a reference point, the
confirmation latencies in Bitcoin and Ethereum are presently $60$
and $3$ minutes, respectively.

For a conservative $T=30$, we have done further experiments confirming (not plotted in Figure~\ref{fig:confirm}) that the partial and full confirmation latencies remain stable under various bandwidth configurations from $8$Mbps to $20$Mbps. Putting it another
way, the latency does not deteriorate as the throughput of \tool increases.

\section{Related Work}
\label{sec:related}

\codename uses PoW under the same permissionless model as Nakamoto consensus.
There have been
works in alternative models, such as using
Proof-of-Stake~\cite{algorand,bitcoin-sok,snow-white, ouroboros} or
assuming a permissioned setting~\cite{garaysok,rscoin,honeybadger}.
PoW-based permissionless blockchain protocols have largely
followed two paradigms: The first based on extensions to Nakamoto
consensus, and the second based on utilizing classical byzantine agreement
(BA). We discuss these two categories one by one.

%

%

\Paragraph{Nakamoto consensus.}
Existing deployments of Nakamoto consensus (e.g., Bitcoin) and its variant the GHOST protocol~\cite{ghost} (e.g., Ethereum)
achieve only about $5$KByte/s throughput.
Bitcoin-NG is a variant of Nakamoto consensus, where many micro-blocks are proposed by the same proposer, after the proposer is
chosen by a key block~\cite{bitcoin-ng}. This helps to significantly improve throughput without comprising security, but at the same time, still has limited decentralization: A single block proposer is responsible for generating all (micro) blocks for an extended period of time (i.e., 10 minutes), inviting censorship attacks and DoS attacks. In comparison, \tool achieves superior decentralization --- in our experiments, in each second, up to about $60$ different proposers propose blocks concurrently.

\Paragraph{Scaling Nakamoto consensus.}
Phantom~\cite{phantom} and Conflux~\cite{conflux} have attempted to scale Nakamoto consensus, by having blocks reference more than one previous blocks. Section~\ref{sec:intro} has already discussed these two works.
  


Chainweb~\cite{martino18,quaintance18} is another attempt to scale
Nakamoto consensus, by maintaining $k$ parallel chains.
Unlike \tool, Chainweb does not have any mechanism to
prevent the adversary from focusing entirely on one of the
$k$ chains.
Chainweb's original analysis~\cite{martino18,quaintance18}
only considers a few specific attack strategies~\cite{pass2017analysis}, instead of all possible adversaries.  Fitzi et
al.~\cite{fitzi18} have shown that an adversary focusing on a specific
chain can cause the confirmation latency in Chainweb to increase {\em
quadratically} in $k$. In contrast, \tool does not suffer from such a problem.
Furthermore, different from \tool where the ${\tt rank}$ values balance all the chains,
Chainweb requires synchronous
growth of all chains, needing to periodically stall fast-growing
chains.
Kiffer et al.~\cite{kiffer18} have provided a further analysis of Chainweb,
which does not support the high-throughput claim by
Chainweb. In their analysis, a natural form of Chainweb is
  ``bounded by the same throughput as Nakamoto protocol for the same
  consistency guarantee''~\cite{kiffer18}.

Concurrent with and independent of this work, there are two online
non-refereed technical reports~\cite{bagaria18,fitzi18} proposing a
similar approach to composing multiple parallel
chains.
While their high-level designs share
similarities with ours, in their designs, one of the $k$ parallel
chains is designated as a special chain, and blocks on the other
chains are related to blocks on that special
chain~\cite{bagaria18,fitzi18}.
%
In \tool, all $k$ chains are equal and symmetric.
More importantly, these two concurrent and
independent works~\cite{bagaria18,fitzi18} do not have any
implementation details or experimental evaluation, while we have a
prototype implementation as well as large-scale evaluation on Amazon
EC2.  In particular, our large-scale experiments confirm
that propagating many parallel blocks does {\em not}
negatively impact block propagation delay, which lays the empirical
foundation for parallel chain designs.

Finally, Spectre~\cite{spectre} confirms blocks
without guaranteeing a total order, while the inclusive protocol~\cite{inclusive} includes as many non-conflicting transactions as possible. These works provide weaker consistency notions than \tool, which guarantees a total order.
These weaker notions may suffice for cryptocurrency payments,
but for smart contracts, total ordering is key in resolving state
conflicts~\cite{sequential-consistency,concurrency-db-79}
and in building towards higher abstractions of
consistency~\cite{peercensus,chainspace,omniledger}.


%



\Paragraph{Blockchain protocols relying on byzantine agreement.}
Some PoW-based permissionless blockchain protocols~\cite{peercensus,hybridconsensus,byzcoin,elastico,rapidchain,omniledger,solida} build upon classical byzantine agreement (BA) protocols. BA protocols require a committee of pre-agreed identities among
which the BA protocol can run. Such a committee can either be established using
Nakamoto consensus itself or using previous rounds of BA.


Different from \tool and other Nakamoto-style protocols, which can tolerate any constant $f< \frac{1}{2}$, BA-based blockchain protocols~\cite{algorand,solida,elastico,rapidchain,byzcoin,hybridconsensus}
all require $f < \frac{1}{3}$.
A key bottleneck in BA-based designs, as acknowledged in prior
works, is that of establishing (or replenishing) committees with
$200$-$2000$ identities each.  This large committee size is necessary
to ensure the requisite resilience $f$. The latency of replenishing
committees can be between in tens of seconds~\cite{algorand,solida},
minutes~\cite{elastico,rapidchain}, or hours~\cite{omniledger}.  After
committees are established and when there are no attacks,
BA protocols tend to have smaller confirmation
latencies~\cite{algorand,rapidchain,byzcoin} than Nakamoto-style protocols.
Finally, we point out that classical BA protocols, such as PBFT~\cite{pbft},
can be considerably more complex to implement and verify, as compared
to Nakamoto consensus.

\Paragraph{Sharding designs.}
A subset of the BA-based blockchain protocols employ sharding~\cite{elastico,omniledger,rapidchain}, where many parallel
shards process blocks in parallel. This helps to improve decentralization and throughput, at the cost of additional complexity. But due to the overheads associated with each shard, these protocols~\cite{elastico,rapidchain,omniledger} typically
use a small number of shards (no more than $25$).
The best decentralization was achieved in \cite{omniledger}, with about $25$ blocks proposed every $8$ seconds. In comparison, \tool does not involve pre-assigning nodes to its $k$ chains, and can easily use as large a $k$ as needed. The decentralization factor of \tool in our experiments (i.e., about $600$ blocks every $10$ seconds) is about $20$x higher than the sharding designs.

\section{Conclusion}
\label{sec:conclusion}

We present \codename, a permissionless proof-of-work blockchain
protocol that composes parallel instances of Nakamoto consensus
securely.  \codename has a simple implementation and a modular safety
and liveness proof. It achieves linear scaling with available
bandwidth and at least about $20$x better decentralization over prior
works.

\section*{Acknowledgments}
We thank Hung Dang, Seth Gilbert, Aquinas Hobor, Ilya Sergey, Shruti
Tople, and Muoi Tran for their helpful
feedbacks on this work. We thank Sourav Das for help on formatting the figures in this paper. This work is partly supported by the sponsors of the
Crystal Center at National University of Singapore. All opinions and
findings presented in this work are those of the authors only.

\bibliographystyle{acm}
\bibliography{paper}

\begin{appendix}


\vspace*{3mm}
\subsection{Proof for Lemma~\ref{lemma:carryover}}
\label{app:carryover}
\vspace*{3mm}

Lemma~\ref{lemma:carryover} directly follows from Lemma~\ref{lemma:mainconsistency} and Lemma~\ref{lemma:mainqg}, which we will state and prove next. To prove Lemma~\ref{lemma:mainconsistency} and Lemma~\ref{lemma:mainqg}, we will need Lemma~\ref{lemma:merkle}. Lemma~\ref{lemma:merkle} shows that except some exponentially small probability, given a block $B$ and regardless of whether $B$ is an honest block or malicious block, the value of $\widehat{B}$ (and in particular, the value of $\widehat{B}.{\tt rank}$ / $\widehat{B}.{\tt next\_rank}$) must be the same on all honest nodes. This avoids the need of reasoning about potentially different $\widehat{B}.{\tt rank}$ ($\widehat{B}.{\tt next\_rank}$) values on different honest nodes. In the following, we will state and prove Lemma~\ref{lemma:merkle} through \ref{lemma:mainqg}, one by one.

\begin{lemma}
\label{lemma:merkle}
Consider the execution of $(k, p, \lambda, T)$-\tool against any given adversary $\mathcal{A}$. With probability at least $1-exp(-\Omega(\lambda))$,
there will never be two honest nodes $u_1$ and $u_2$ adding $(B_1, \widehat{B_1})$ and $(B_2, \widehat{B_2})$ to their local set of blocks, respectively, such that $B_1=B_2$ and $\widehat{B_1}\ne \widehat{B_2}$.
\end{lemma}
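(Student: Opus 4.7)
\vspace{2mm}
\noindent\textbf{Proof plan.} The plan is to show that, conditioned on no collision ever occurring among the polynomially many queries to the random oracle, every component of $\widehat{B}$ is a deterministic function of $B$ alone. Once that is established, the claimed failure probability $exp(-\Omega(\lambda))$ follows from a standard birthday / union bound on hash collisions over the polynomial-length execution.

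I would work through the fields of $\widehat{B}$ in the order of their dependencies. The field $\widehat{B}.{\tt hash}$ is pinned to $hash(B)$ by the verification on Line~\ref{code:h1verify}, and therefore the chain index $i$, which is the last $\log_2 k$ bits of $\widehat{B}.{\tt hash}$, is likewise determined by $B$. Next, the verification on Line~\ref{code:h2verify} recomputes $B.{\tt root}$ by hashing $\widehat{B}.{\tt leaf}$ with the siblings in $\widehat{B}.{\tt leaf\_proof}$ along the path dictated by $i$; if two distinct $(\widehat{B}.{\tt leaf},\widehat{B}.{\tt leaf\_proof})$ pairs both passed verification for the same $B.{\tt root}$ and $i$, then comparing the two computations from the leaf upward would expose a pair of distinct pre-images mapping to the same hash, contradicting the no-collision conditioning. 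Hence $\widehat{B}.{\tt leaf}$ and $\widehat{B}.{\tt leaf\_proof}$ are also forced by $B$.

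The remaining fields $\widehat{B}.{\tt rank}$ and $\widehat{B}.{\tt next\_rank}$ require an induction. I would order blocks according to the dependency order that the pseudo-code explicitly enforces in ReceiveState (a block $B_1$ is processed before any block $B_2$ whose $\widehat{B_2}.{\tt leaf}$ or $B_2.{\tt trailing}$ points to $B_1$); genesis blocks form the base case with their attachments hard-wired by the initialization. For the inductive step, $\widehat{B}.{\tt rank}$ is set to $\widehat{A}.{\tt next\_rank}$ for the unique $A$ with $\widehat{A}.{\tt hash}=\widehat{B}.{\tt leaf}$, and $\widehat{B}.{\tt next\_rank}$ is derived from $\widehat{C}.{\tt next\_rank}$ for the unique $C$ with $\widehat{C}.{\tt hash}=B.{\tt trailing}$ (uniqueness of $A$ and $C$ again from no-collision conditioning, since two distinct blocks with equal hash would be a collision). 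By the inductive hypothesis $\widehat{A}$ and $\widehat{C}$ are themselves uniquely determined by $A$ and $C$, so $\widehat{B}.{\tt rank}$ and $\widehat{B}.{\tt next\_rank}$ are uniquely determined by $B$.

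\textbf{Main obstacle.} I expect the technically most delicate point to be pinning down the induction order in a way that is consistent across all honest nodes, since nodes may receive blocks in different temporal orders even though each one respects dependency order locally; the induction is on the partial order of references, which is shared, and not on the time of processing. Once that is spelled out, the rest collapses to a union bound: the total number of random-oracle queries over the execution is polynomial in $\lambda$, so the probability of any collision is at most $q^2/2^{\lambda+1}=exp(-\Omega(\lambda))$, giving the stated bound.
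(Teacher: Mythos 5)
Your proposal is correct and follows essentially the same route as the paper's proof: condition on the absence of random-oracle bad events, argue field by field that the attachment is determined by the block itself, and handle ${\tt rank}$/${\tt next\_rank}$ by induction over the shared reference DAG (via $\widehat{B}.{\tt leaf}$ and $B.{\tt trailing}$) rooted at the genesis blocks. The only cosmetic difference is that the paper additionally excludes the event that a verification succeeds on a never-previously-queried input and then traces the Merkle tree top-down from $B.{\tt root}$, whereas you extract a collision by comparing the two bottom-up verification transcripts; both arguments work.
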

\begin{proof}
Define ${\tt bad1}$ to be the event that the execution (of $(k, p, \lambda, T)$-\tool against $\mathcal{A}$) contains
some invocations $hash(x_1)$ and $hash(x_2)$ such that $x_1\ne x_2$ and $hash(x_1)= hash(x_2)$. Given that we model the hash function as a random oracle, and given that the length of the execution is $\mbox{poly}(\lambda)$, we have $\Pr[{\tt bad1}] = exp(-\Omega(\lambda))$.

Next define ${\tt bad2}$ to be the event that the execution (of $(k, p, \lambda, T)$-\tool against $\mathcal{A}$) contains some invocation of $hash()$ that belongs to one of the following two categories:
\begin{itemize}
\item For some $B$, some honest node invokes $hash(B)$ at Line~\ref{code:h1verify} of Figure~\ref{fig:code} and the verification at that line succeeds (i.e., $hash(B)$ indeed equals $\widehat{B}.{\tt hash}$), despite that $hash(B)$ has never been previous invoked (by either some honest node or the adversary) in the execution.
\item Some honest node invokes $hash()$ at Line~\ref{code:h2verify} of Figure~\ref{fig:code} and the Merkle verification at that line succeeds, despite the following: Let $hash(x_1)$, $hash(x_2)$, \ldots, $hash(x_{\log_2 k})$ be the $\log_2 k$ hash invocations done by the honest node during the successful Merkle verification. There exists some $x_i$ ($1\le i\le \log_2 k$) such that $hash(x_i)$ has never been previously invoked (by either some honest node or the adversary) in the execution.
\end{itemize}
Again given that we model the hash function as a random oracle, and given that the length of the execution is $\mbox{poly}(\lambda)$, one can easily verify that $\Pr[{\tt bad2}] = exp(-\Omega(\lambda))$.

A simple union bound then shows that with probability at least $1- exp(-\Omega(\lambda))$, neither ${\tt bad1}$ nor ${\tt bad2}$ happens. It now suffices to prove that conditioned upon neither of the bad events happening, if $B_1 = B_2$, then we must have $\widehat{B_1} = \widehat{B_2}$.
An attachment consists of five fields, namely, ${\tt hash}$, ${\tt leaf}$, ${\tt leaf\_proof}$, ${\tt rank}$, and ${\tt next\_rank}$. We will reason about these one by one.

Since $B_1 = B_2$, and since $(B_1, \widehat{B_1})$ and $(B_2, \widehat{B_2})$
have been accepted by $u_1$ and $u_2$ respectively,
we trivially have
$\widehat{B_1}.{\tt hash} = \widehat{B_2}.{\tt hash}$.
We next prove $\widehat{B_1}.{\tt leaf} = \widehat{B_2}.{\tt leaf}$ and $\widehat{B_1}.{\tt leaf\_proof} = \widehat{B_2}.{\tt leaf\_proof}$.
Since $u_1$ has verified the Merkle proof before accepting $(B_1, \widehat{B_1})$, $u_1$ must have invoked $hash(x)$ for some $x$ of length $2\lambda$, with the return value of such invocation being $B_1.{\tt root}$.
Since ${\tt bad1}$ and ${\tt bad2}$ do not happen, such an $x$ must be unique. Let $x_0|x_1 \leftarrow x$, where $x_0$ and $x_1$ are both of length $\lambda$. Following this process for $\log_2 k$ steps (we are effectively tracing down the Merkle tree), we will find a unique $x_\bot$ value, which corresponds to leave $i$ of the Merkle tree, with $i$ being the last $\log_2 k$ bits of $\widehat{B_1}.{\tt hash}$. Since ${\tt bad1}$ and ${\tt bad2}$ do not happen, in order for $(B_1, \widehat{B_1})$ to pass the verification by $u_1$, we must have $\widehat{B_1}.{\tt leaf} = x_\bot$. By the same argument and since $B_1.{\tt root} = B_2.{\tt root}$, we must also have $\widehat{B_2}.{\tt leaf} = x_\bot$. Hence we conclude $\widehat{B_1}.{\tt leaf} = \widehat{B_2}.{\tt leaf}$. A similar argument shows that $\widehat{B_1}.{\tt leaf\_proof} = \widehat{B_2}.{\tt leaf\_proof}$ as well.

Next we move on to the ${\tt rank}$ and ${\tt next\_rank}$ fields. Let $A_1$ be any block on $u_1$ such that $\widehat{A_1}.{\tt hash} = x_\bot$, where $x_\bot$ is obtained as above. Similarly define $A_2$. Since ${\tt bad1}$ and ${\tt bad2}$ do not happen, we must have $A_1 = A_2$. Next, let $C_1$ be any block on $u_1$ such that $\widehat{C_1}.{\tt hash} = B_1.{\tt trailing}$, and let $C_2$ be any block on $u_2$ such that $\widehat{C_2}.{\tt hash} = B_2.{\tt trailing}= B_1.{\tt trailing}$. Similarly, we must also have $C_1 = C_2$.

Consider the DAG $\mathcal{G}_1$ consisting of all the blocks on $u_1$ as vertices, where for each block $B_1$, there is an edge to $B_1$ from the corresponding $A_1$, and another edge to $B_1$ from the corresponding $C_1$. We do a topological sort of all the vertices in $\mathcal{G}_1$, and assume that $B_1$ is the $j$th block in the topological sort. (Note that $j$ is based on $B_1$ and $\mathcal{G}_1$, and has nothing to do with $B_2$.) We will do an induction on $j$ to show that for all $B_1 = B_2$, we have $\widehat{B_1}.{\tt rank} = \widehat{B_2}.{\tt rank}$ and $\widehat{B_1}.{\tt next\_rank} = \widehat{B_2}.{\tt next\_rank}$.

The induction basis for $j$ from $1$ to $k$ trivially holds (these are the $k$ genesis blocks). Now assume that the previous claim holds for all $j < j_1$, and we prove the claim for $j = j_1$. Since $B_1$ is the $j_1$-th block in the topological sort, the position of $A_1$ and $C_1$ in the topological sort must be before $j_1$. Furthermore, we have shown earlier that $A_1 = A_2$ and $C_1 = C_2$. We can thus invoke the inductive hypothesis on $A_1$ and $C_1$, which shows that $\widehat{A_1}.{\tt next\_rank} = \widehat{A_2}.{\tt next\_rank}$ and
$\widehat{C_1}.{\tt next\_rank} = \widehat{C_2}.{\tt next\_rank}$.
By Line~\ref{code:rankstart} in Figure~\ref{fig:code},
$\widehat{B_1}.{\tt rank}$ is set to be the same as $\widehat{A_1}.{\tt next\_rank}$, while $\widehat{B_2}.{\tt rank}$ is set to be the same as $\widehat{A_2}.{\tt next\_rank}$. Hence we have $\widehat{B_1}.{\tt rank} = \widehat{B_2}.{\tt rank}$. A similar argument shows $\widehat{B_1}.{\tt next\_rank} = \widehat{B_2}.{\tt next\_rank}$.
\end{proof}

\begin{lemma}
\label{lemma:mainconsistency}
If the three properties in Theorem~\ref{the:blackbox} hold for each of the $k$ chains in $(k,p, \lambda, T)$-\tool, then with probability at least $1-exp(-\Omega(\lambda))$, $(k,p, \lambda, T)$-\tool satisfies the {\bf consistency} property in Theorem~\ref{the:main}.
\end{lemma}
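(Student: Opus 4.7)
\textbf{Proof plan for Lemma~\ref{lemma:mainconsistency}.} The plan is to invoke Lemma~\ref{lemma:merkle} so that, except on an $\exp(-\Omega(\lambda))$-probability event, every honest node assigns the same ${\tt rank}$ and ${\tt next\_rank}$ to any given block $B$. Consequently, every honest node that includes $B$ in its SCB places $B$ at the same position within the total order induced by $({\tt rank}, \mbox{chain id})$. This removes the complication of different honest views disagreeing about where a given block sits in the sort, and reduces the problem to showing that the two \emph{sets} of fully-confirmed blocks produced at $(u_1,t_1)$ and $(u_2,t_2)$ are related by set inclusion with the smaller one sitting at an initial segment of the rank axis.

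I would then fix notation: for view $j\in\{1,2\}$ let $P_i^{(j)}$ be the sequence of partially-confirmed blocks on chain $i$, let $y_i^{(j)}$ be the ${\tt next\_rank}$ of the last block of $P_i^{(j)}$, and let ${\tt cb}_j={\tt confirm\_bar}_j=\min_i y_i^{(j)}$. Without loss of generality assume ${\tt cb}_1\le {\tt cb}_2$; the goal is then $S_1$ is a prefix of $S_2$. The key structural observation is that on any single chain the ${\tt rank}$ values are strictly increasing along the chain (by Line~\ref{code:rankstart}--\ref{code:rendend} of Figure~\ref{fig:code}, since ${\tt next\_rank}\ge{\tt rank}+1$ and the next block inherits its predecessor's ${\tt next\_rank}$ as its own ${\tt rank}$). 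Combined with Theorem~\ref{the:blackbox}'s per-chain consistency, one of $P_i^{(1)}\preceq P_i^{(2)}$ or $P_i^{(2)}\preceq P_i^{(1)}$ holds for every $i$.

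Next, I would show $S_1\subseteq S_2$ as sets. Take $B\in S_1$ on chain $i$ with $\widehat{B}.{\tt rank}<{\tt cb}_1$. If $P_i^{(1)}\preceq P_i^{(2)}$, then $B\in P_i^{(2)}$ trivially. Otherwise $P_i^{(2)}\preceq P_i^{(1)}$; if $B$ were strictly beyond $P_i^{(2)}$, then $\widehat{B}.{\tt rank}\ge y_i^{(2)}\ge {\tt cb}_2\ge {\tt cb}_1$, contradicting $\widehat{B}.{\tt rank}<{\tt cb}_1$, so again $B\in P_i^{(2)}$. Since $\widehat{B}.{\tt rank}<{\tt cb}_1\le {\tt cb}_2$, we get $B\in S_2$. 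The same rank-based argument, used contrapositively, shows that any $B'\in S_2\setminus S_1$ must satisfy $\widehat{B'}.{\tt rank}\ge {\tt cb}_1$ (otherwise $B'$ would lie inside $P_i^{(1)}$ and thus inside $S_1$). Because the total order sorts by ${\tt rank}$ with chain-id tie-breaking and $\widehat{B}.{\tt rank}<{\tt cb}_1\le \widehat{B'}.{\tt rank}$, every block of $S_1$ precedes every block of $S_2\setminus S_1$; together with the identical sort on the common blocks, this gives $S_1\preceq S_2$.

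For the ``furthermore'' clause (same node with $t_1<t_2$, or different nodes with $t_1+\Delta<t_2$), I would argue ${\tt cb}_1\le {\tt cb}_2$ directly so that the prefix goes the right way. When $u_1=u_2$ and $t_1<t_2$, within any single chain the local view only accumulates blocks over time, so the longest path at $t_2$ cannot have a strictly shorter partially-confirmed prefix than at $t_1$; combined with the per-chain prefix dichotomy this forces $P_i^{(1)}\preceq P_i^{(2)}$, hence $y_i^{(1)}\le y_i^{(2)}$ and ${\tt cb}_1\le {\tt cb}_2$. For $u_1\ne u_2$ with $t_1+\Delta<t_2$, every block in $V_i$ on $u_1$ at $t_1$ has been delivered to $u_2$ by $t_2$ (by the $\Delta$ synchrony assumption), so the same monotonicity argument applies at $u_2$. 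I anticipate the main obstacle being this time-monotonicity step: Theorem~\ref{the:blackbox} only provides a symmetric prefix dichotomy, so care is needed to rule out the ``wrong-direction'' case using the fact that local block sets only grow plus network delivery within $\Delta$. All failure modes (hash collisions, Merkle forgeries, disagreement on $\widehat{B}$) are absorbed into the $\exp(-\Omega(\lambda))$ term from Lemma~\ref{lemma:merkle}.
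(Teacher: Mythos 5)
Your proposal is correct and follows essentially the same route as the paper's proof: invoke Lemma~\ref{lemma:merkle} to fix a global $({\tt rank},{\tt next\_rank})$ assignment, assume WLOG ${\tt confirm\_bar}_1\le{\tt confirm\_bar}_2$, combine the per-chain prefix dichotomy of Theorem~\ref{the:blackbox} with rank monotonicity along a chain and the definition of ${\tt confirm\_bar}$ to get both set inclusion and the rank separation of $S_2\setminus S_1$ from $S_1$, and handle the ``furthermore'' clause via monotonicity of ${\tt confirm\_bar}$ over time and $\Delta$-bounded delivery. The only cosmetic difference is that the paper packages the two directions as explicit claims about the truncated sequences $G_1(i)$ and $G_2(i)$, whereas you argue element-wise; the substance is identical.
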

\begin{proof}
Lemma~\ref{lemma:merkle} shows that with probability at least $1-exp(-\Omega(\lambda))$, there will never be two honest nodes adding $(B_1, \widehat{B_1})$ and $(B_2, \widehat{B_2})$ to their respective local set of blocks, such that $B_1 = B_2$ and $\widehat{B_1}\ne \widehat{B_2}$. This means that for each block $B$ accepted by an honest node, its ${\tt rank}$ and ${\tt next\_rank}$ on this honest node will be the same as the corresponding values on all other honest nodes. Hence we can directly refer to the ${\tt rank}$ and ${\tt next\_rank}$ of a block $B$, and no longer need to consider the values of $\widehat{B}.{\tt rank}$ and $\widehat{B}.{\tt next\_rank}$ on individual nodes. All our following discussion will be conditioned on this.

The following restates the {\bf consistency} property in Theorem~\ref{the:main}, which we need to prove:
\begin{itemize}
\item {\bf (consistency)} Consider the SCB $S_1$ on any node $u_1$ at any time $t_1$, and the SCB $S_2$ on any node $u_2$ at any time $t_2$.\footnote{Here $u_1$ ($t_1$) may or may not equal $u_2$ ($t_2$).} Then either $S_1$ is a prefix of $S_2$ or $S_2$ is a prefix of $S_1$. Furthermore, if ($u_1=u_2$ and $t_1 < t_2$) or
    ($u_1\ne u_2$ and $t_1 +\Delta < t_2$), then $S_1$ is a prefix of $S_2$.
\end{itemize}

Let the view of node $u_1$ at time $t_1$ be $\Psi_1$, and the view of node $u_2$ at time $t_2$ be $\Psi_2$. Let $x_1$ and $x_2$ be the ${\tt confirm\_bar}$ in $\Psi_1$ and $\Psi_2$, respectively. Without loss of generality, assume $x_1\le x_2$. Let $S_1$ and $S_2$ be the SCB in $\Psi_1$ and $\Psi_2$, respectively. The next will prove that $S_1$ is a prefix of $S_2$. In the proof, we will sometimes use set operations over $S_1$ and $S_2$. For example, $S_1\cap S_2$ refers to the set of common blocks in $S_1$ and $S_2$.

Let $F_1(i)$ be the sequence of partially-confirmed blocks on chain $i$ in $\Psi_1$. Let $G_1(i)$ be the prefix of $F_1(i)$ such that $G_1(i)$ contains all blocks in $F_1(i)$ whose ${\tt rank}$ is smaller than $x_1$. Similarly define $F_2(i)$ and $G_2(i)$, where $G_2(i)$ contains those blocks in $F_2(i)$ whose ${\tt rank}$ is smaller than $x_2$. We first prove the following two claims:
\begin{itemize}
\item For all $i$ where $0\le i\le k-1$, $G_1(i)$ is a prefix of $G_2(i)$. To prove this claim, note that by the {\bf consistency} property in Theorem~\ref{the:blackbox}, either $F_1(i)$ is a prefix of $F_2(i)$ or $F_2(i)$ is a prefix of $F_1(i)$. If $F_1(i)$ is a prefix of $F_2(i)$, then together with the fact that $x_1\le x_2$, it is obvious that $G_1(i)$ is a prefix of $G_2(i)$. If $F_2(i)$ is a prefix of $F_1(i)$, let $x_3$ be the ${\tt rank}$ of the last block in $F_2(i)$. This also means that the ${\tt next\_rank}$ of that block is at least $x_3+1$. By our design of ${\tt confirm\_bar}$, we know that $x_2\le x_3+1$. In turn, we have $x_1\le x_2\le x_3+1$. Hence $G_1(i)$ must also be a prefix of $G_2(i)$.
\item For all $i$ where $0\le i\le k-1$ and all block $B \in G_2(i)\setminus G_1(i)$, $B$'s ${\tt rank}$ must be no smaller than $x_1$. We prove this claim via a contradiction and assume that $B$'s ${\tt rank}$ is smaller than $x_1$. Together with the fact that $B$ is in $G_2(i)\setminus G_1(i)$, we know that $B$ is in $F_2(i)$ but not in $F_1(i)$. Hence $F_2(i)$ cannot be a prefix of $F_1(i)$. Then by the {\bf consistency} property in Theorem~\ref{the:blackbox}, $F_1(i)$ must be a prefix of $F_2(i)$.
    Let $x_4$ be the ${\tt rank}$ of the last block $D$ in $F_1(i)$.
    Since $F_1(i)$ is a prefix of $F_2(i)$, both $D$ and $B$ must be in $F_2(i)$, and $D$ must be before $B$ in $F_2(i)$. Since the blocks in $F_2(i)$ must have increasing ${\tt rank}$ values, we know that $D$'s ${\tt rank}$ must be smaller than $B$'s. Hence we have $x_4 = \mbox{$D$'s ${\tt rank}$} <
    \mbox{$B$'s ${\tt rank}$} \le x_1-1$, or more concisely, $x_4 < x_1-1$. On the other hand, since $D$'s ${\tt rank}$ is $x_4$, its ${\tt next\_rank}$ must be at least $x_4+1$.
    By our design of ${\tt confirm\_bar}$ in $\Psi_1$, we know that $x_1\le x_4+1$ and hence $x_4 \ge x_1-1$. This yields a contradiction.
\end{itemize}

Now we can use the above two claims to prove that $S_1$ is a prefix of $S_2$.
$S_1$ consists of all the blocks in $G_1(0)$ through $G_1(k-1)$, while $S_2$ consists of all the blocks in $G_2(0)$ through $G_2(k-1)$. Since $G_1(i)$ is a prefix of $G_2(i)$ for all $i$, we know that $S_1\subseteq S_2$. For all blocks in $S_1$ (which is the same as $S_2\cap S_1$), the sequence $S_1$ orders them in exactly the same way as the sequence $S_2$. For all block $B\in S_2\setminus S_1$, by the second claim above, we know that $B$'s ${\tt rank}$ must be no smaller than $x_1$. Hence in the sequence $S_2$, all blocks in $S_2\setminus S_1$ must be ordered after all the blocks in $S_2\cap S_1$ (whose ${\tt rank}$ must be smaller than $x_1$). This completes our proof that $S_1$ is a prefix of $S_2$.

Finally, if $u_1=u_2$ and $t_1 < t_2$, then since ${\tt confirm\_bar}$ on a node never decreases over time, we must have $x_1\le x_2$. Similarly, if $u_1\ne u_2$ and $t_1 +\Delta < t_2$, then by time $t_2$, node $u_2$ must have seen all blocks seen by $u_1$ at time $t_1$. By the {\bf consistency} property in Theorem~\ref{the:blackbox}, all partially-confirmed blocks on $u_1$ at time $t_1$ must also be partially-confirmed on $u_2$ at time $t_2$. Thus we must also have $x_1\le x_2$. Putting everything together, if ($u_1=u_2$ and $t_1 < t_2$) or ($u_1\ne u_2$ and $t_1 +\Delta < t_2$), then $S_1$ must be a prefix of $S_2$.
\end{proof}

\begin{lemma}
\label{lemma:mainqg}
If the three properties in Theorem~\ref{the:blackbox} hold for each of the $k$ chains in $(k,p, \lambda, T)$-\tool, then with probability at least $1-exp(-\Omega(\lambda))$, $(k,p, \lambda, T)$-\tool satisfies the {\bf quality-growth} property in Theorem~\ref{the:main}.
\end{lemma}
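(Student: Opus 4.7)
The plan is to fix an arbitrary honest node $u$ and time $t_0 \ge \frac{2T}{pn}$, and argue that during the window $[t_0, t_1]$ with $t_1 - t_0 = (\gamma+2)\cdot\frac{2T}{pn}+2\Delta$, the SCB on $u$ gains at least $\gamma k \cdot \frac{1-2f}{1-f}T$ honest blocks. Throughout I condition on the $1-\exp(-\Omega(\lambda))$ event of Lemma~\ref{lemma:merkle}, so that $\widehat{B}.{\tt rank}$ and $\widehat{B}.{\tt next\_rank}$ are unambiguous per-block quantities agreed upon by every honest node. Write $y_j(t)$ for the ${\tt next\_rank}$ of the last partially-confirmed block on chain $j$ on $u$ at time $t$, and $c_t = \min_j y_j(t) = {\tt confirm\_bar}$ on $u$ at time $t$. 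I will establish (a) that $c_{t_1} \ge c_{t_0} + (\gamma+2)T$, and (b) that on every chain $j$ and every $t$ in the window, $y_j(t) - c_t \le 2T$; combining these gives the result by a direct counting argument.

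For (a), apply the {\bf growth} property of each chain to the first $(\gamma+2)\frac{2T}{pn}$ ticks of the window: the longest path on each chain on $u$ gains at least $(\gamma+2)T$ new blocks, and by the same token so does the partially-confirmed sequence on each chain. Since the rank-computation rule enforces $\widehat{B}.{\tt next\_rank} \ge \widehat{A}.{\tt next\_rank} + 1$ whenever $B$ extends $A$ on the same chain, each $y_j$ grows by at least $(\gamma+2)T$ over this sub-interval, so $c_{t_1} = \min_j y_j(t_1) \ge \min_j y_j(t_0) + (\gamma+2)T = c_{t_0}+(\gamma+2)T$ (using also the monotonicity of $y_j$ in $t$).

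For (b), the spread bound, the intuition is that the trailing-block mechanism forces each honest miner to set the new block's ${\tt next\_rank}$ to the current global maximum, so every honest arrival on chain $j$ resets $y_j$ to essentially that maximum. The {\bf quality} property of each chain upper-bounds the gap between consecutive honest blocks on any chain by $T$ positions, which by {\bf growth} is at most $\frac{2T}{pn}$ ticks. In such an interval, a standard Chernoff bound on PoW successes on a single chain shows that at most $O(T)$ new longest-path blocks arrive on whichever chain currently holds the global maximum, and hence the global maximum can grow by at most $O(T)$; propagation delays of up to $\Delta$ are absorbed by the window's $2\Delta$ slack. This spread bound is the main obstacle of the proof: its formalization must contend with adversarial manipulation of ${\tt trailing}$ pointers on malicious blocks (which can leave $y_j$ arbitrarily small), block-withholding affecting when blocks become visible on $u$, and the view-local nature of the global maximum on different nodes.

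Given (a) and (b), the counting is immediate. A block is newly fully-confirmed on $u$ during the window iff it is partially-confirmed on $u$ at $t_1$ with rank in $[c_{t_0}, c_{t_1})$. On chain $j$, at most $y_j(t_1) - c_{t_1} \le 2T$ of the partially-confirmed blocks on $u$ at $t_1$ have rank $\ge c_{t_1}$ (since ranks strictly increase along a chain), while at least $(\gamma+2)T$ new partially-confirmed blocks were added during the window, each with rank $\ge y_j(t_0) \ge c_{t_0}$. Subtracting yields at least $\gamma T$ newly fully-confirmed blocks on chain $j$, and these form a contiguous subsequence of chain $j$. Applying the {\bf quality} property to $\gamma$ disjoint windows of $T$ consecutive blocks among them gives at least $\gamma \cdot \frac{1-2f}{1-f}T$ honest blocks per chain, and summing over the $k$ chains completes the proof.
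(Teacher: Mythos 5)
Your step (a) and your final counting argument are sound, and conditioning on Lemma~\ref{lemma:merkle} is the right setup, but step (b) --- the uniform spread bound $y_j(t)-c_t\le 2T$ --- is a genuine gap, and it is exactly the step the paper's proof is engineered to avoid. To bound how far $y_j$ can run ahead of ${\tt confirm\_bar}$ you need an \emph{upper} bound on how many blocks a single chain can acquire in a window: the global maximum ${\tt next\_rank}$ can only be pushed past its current value by a block whose ${\tt rank}$ equals that maximum, i.e., by extending the chain currently carrying the maximum, so its growth is governed by the fastest-growing chain. No such upper bound is among the hypotheses you are allowed to use --- {\bf growth} in Theorem~\ref{the:blackbox} is a lower bound, {\bf quality} is a fraction, and {\bf consistency} is a prefix condition. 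Your proposed fix, ``a standard Chernoff bound on PoW successes,'' does not fit the lemma's structure: the lemma conditions on the three properties and budgets only an additional $exp(-\Omega(\lambda))$ failure probability (which the paper spends entirely on the hash-collision events of Lemma~\ref{lemma:merkle}), whereas a concentration bound on block arrivals over a $\frac{2T}{pn}$-tick window costs $exp(-\Omega(T))$, must hold simultaneously at every tick of a $\mbox{poly}(\lambda)$-length execution, and would not deliver the constant $2T$ your arithmetic needs (the \emph{expected} number of successful queries for one chain over such a window is already about $2T$, so any high-probability upper bound exceeds it). You also gloss over the $T$-block lag between an honest block arriving on a chain and it becoming partially-confirmed, which is what actually moves $y_j$.

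The paper takes a different route that never needs a growth upper bound. It fixes the set $\alpha_i$ of honest blocks that become newly partially-confirmed on $u$ during $[t_0,\, t_0+\gamma\cdot\frac{2T}{pn}]$ (at least $\gamma\cdot\frac{1-2f}{1-f}T$ per chain by {\bf growth} plus {\bf quality}), lets $x$ be the largest ${\tt next\_rank}$ in $\alpha_0$, and observes that every block of $\alpha_0$ has ${\tt rank}<x$. After a further $\Delta$ ticks all honest nodes have these blocks partially-confirmed (by {\bf consistency}), so from then on \emph{every} honest block mined anywhere has ${\tt next\_rank}\ge x$ via the ${\tt trailing}$ mechanism. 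Then {\bf growth} over $\frac{4T}{pn}$ more ticks puts $2T$ fresh blocks on each chain, the first $T$ of which are partially-confirmed and, by {\bf quality}, contain at least one honest block generated after the $\Delta$-propagation; that block certifies $y_j\ge x$ on every chain $j$, hence ${\tt confirm\_bar}\ge x$ and all of $\alpha_0$ is fully-confirmed within the stated $(\gamma+2)\cdot\frac{2T}{pn}+2\Delta$ ticks. I would redirect your effort from proving the spread bound to this targeted argument.
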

\begin{proof}
Same as the reasoning in the beginning of the proof for Lemma~\ref{lemma:mainconsistency}, we first invoke Lemma~\ref{lemma:merkle} to show that with probability at least $1-exp(-\Omega(\lambda))$, there will never be two honest nodes adding $(B_1, \widehat{B_1})$ and $(B_2, \widehat{B_2})$ to their respective local set of blocks, such that $B_1 = B_2$ and $\widehat{B_1}\ne \widehat{B_2}$. All our following discussion will be conditioned on this, and we will be able to directly refer to the ${\tt rank}$ and ${\tt next\_rank}$ of a block $B$.


The following restates the {\bf quality-growth} property in Theorem~\ref{the:main}, which we need to prove:
\begin{itemize}
\item {\bf (quality-growth)}
For all integer $\gamma\ge 1$, the following property holds after the very first $\frac{2T}{pn}$ ticks of the execution: On any honest node, in every
    $(\gamma +2)\cdot \frac{2T}{pn}+2\Delta$ ticks, at least
    $\gamma\cdot k \cdot  \frac{1-2f}{1-f}T$ honest blocks are newly added to SCB.
\end{itemize}

Consider any given honest node $u$, and any given time $t_0$ (in terms of ticks from the beginning of the execution), where $t_0$ is after the very first $\frac{2T}{pn}$ ticks of the execution. By the {\bf growth} property in Theorem~\ref{the:blackbox}, at time $t_0$, the length of every chain in $(k,p, \lambda, T)$-\tool must be at least $T$. By the {\bf growth} property and the {\bf quality} property in Theorem~\ref{the:blackbox}, we know that from time $t_0$ to time $t_1= t_0+ \gamma\cdot \frac{2T}{pn}$ on node $u$, every chain in $(k,p, \lambda, T)$-\tool has at least $\gamma\cdot \frac{1-2f}{1-f}T$ honest blocks becoming newly partially-confirmed. Let $\alpha_i$ denote the set of such newly partially-confirmed honest blocks on chain $i$ (for $0\le i\le k-1$). We have $|\alpha_i|\ge \gamma\cdot \frac{1-2f}{1-f}T$ for all $i$, and $|\cup_{i=0}^{k-1} \alpha_i| \ge \gamma\cdot k\cdot \frac{1-2f}{1-f}T$.
To prove the lemma, it suffices to show that by time $t_4 = t_0 + (\gamma+2)\cdot \frac{2T}{pn}+2\Delta$ on node $u$, all blocks in $\alpha_i$ will have become fully-confirmed for all $0\le i\le k-1$. Without loss of generality, we only need to prove for $\alpha_0$.


Let $x$ be the largest ${\tt next\_rank}$ among all the blocks in $\alpha_0$. By definition of such $x$, the ${\tt rank}$ value of every block in $\alpha_0$ must be smaller than $x$. Let $y$ be the length of chain $0$ on node $u$ at time $t_1$. By time $t_2 = t_1+ \Delta$, all honest nodes will have received all blocks in $\alpha_0$. Furthermore, the length of chain $0$ on all honest nodes at time $t_2$ must be at least $y$. Together with the {\bf consistency} property in Theorem~\ref{the:blackbox}, we know that all the blocks in $\alpha_0$ must be partially-confirmed on all honest nodes by time $t_2$.
Thus starting from $t_2$, whenever an honest node (including node $u$) mines a block, by the design of \tool, the ${\tt next\_rank}$ of the new honest block will be at least $x$. We will invoke this important property later.

Next let us come back to the honest node $u$, and consider any one of the chains.
From time $t_3 = t_2 + \Delta$ to time $t_4 = t_3 + \frac{4T}{pn}$ on node $u$, by the {\bf growth} property in Theorem~\ref{the:blackbox}, the length of this chain must have increased by at least $2T$ blocks. The first $T$ blocks among all these blocks must have been partially-confirmed on node $u$ at time $t_4$. By the {\bf quality} property in Theorem~\ref{the:blackbox}, these first $T$ blocks must contain at least
$\frac{1-2f}{1-f}T$ honest blocks. For $T\ge \frac{1-f}{1-2f}$, these first $T$ blocks must contain at least one honest block $B$ that is partially-confirmed. Since $B$ is first seen by $u$ no earlier than $t_3$, we know that this honest block $B$ must have been generated (either by $u$ or by some other honest node) no earlier than $t_2$. By our earlier argument, the ${\tt next\_rank}$ of $B$ must be at least $x$.

Finally, note that we actually have one such $B$ on {\em every} chain on node $u$ at time $t_4$. This means that on node $u$ at time $t_4$, the ${\tt confirm\_bar}$ is at least $x$. We earlier showed that for all blocks in $\alpha_0$, their ${\tt rank}$ values must all be smaller than $x$. Hence such a ${\tt confirm\_bar}$ enables all blocks in $\alpha_0$ to become fully-confirmed on node $u$ at time $t_4$. Observe that $t_4 = t_0 + (\gamma+2)\cdot \frac{2T}{pn}+2\Delta$, and we are done.
\end{proof}

\vspace*{3mm}
\subsection{Additional Formalism for the Proof of Lemma~\ref{lemma:reduction}}
\label{app:reductionformal}
\vspace*{3mm}

This section introduces some additional modelling and formalism, to prepare for the proof of Lemma~\ref{lemma:reduction} in Appendix~\ref{app:reductionproof}.

\Paragraph{PoW based on a random oracle.}
In \tool, for performing and verifying PoW, all the nodes have access to  a common random orale ${\tt H1}$ (i.e., the hash function, which is viewed as a random oracle).
We need to properly model PoW based on such a random oracle, and we use exactly the same approach as in Pass et al.'s analysis on Nakamoto consensus~\cite{pass2017analysis}. Specifically, let the function $f()$ be the random function corresponding to ${\tt H1}$. The random oracle ${\tt H1}$ has two interfaces: For mining a block, an honest node invokes ${\tt H1.compute}(B)$ to compute $f(B)$ (i.e., Line~\ref{code:h1compute} of Figure~\ref{fig:code} will become ``$\widehat{B}.{\tt hash} \leftarrow {\tt H1.compute}(B)$''). Such mining is successful iff $f(B)$ has a certain number of leading zeroes. For verifying a received block, an honest node invokes ${\tt H1.verify}(B,y)$ to check whether $f(B)$ equals $y$ (i.e., Line~\ref{code:h1verify} of Figure~\ref{fig:code} will become
``verify that ${\tt H1.verify}(B,\widehat{B}.{\tt hash}) = {\tt true}$''). The adversary may invoke ${\tt H1.compute}(B)$ and ${\tt H1.verify}(B,y)$ in {\em arbitrary} ways --- for example, the adversary may choose not to invoke ${\tt H1.verify}(B,y)$, but instead invoke ${\tt H1.compute}(B)$ to verify whether $f(B)=y$. In each tick, an honest node can invoke ${\tt H1.compute}()$ only once, and the adversary can invoke ${\tt H1.compute}()$ at most $fn$ times. On the other hand, ${\tt H1.verify}()$ invocations are ``free'' and can be invoked any number of times (except that the total number of such invocations is still bounded by length of the entire execution).
As explained in \cite{pass2017analysis}, separating these two interfaces of ${\tt H1}$ helps to capture the assumption that doing a PoW incurs non-trivial cost, while verifying a PoW does not.

Related to the above, and also following the same approach in Pass et al.~\cite{pass2017analysis}, in \tool each block $B$ is sent together with its corresponding $f(B)$ value. Specifically, the attachment $\widehat{B}$ contains a field $\widehat{B}.{\tt hash}$ to store $f(B)$. This is needed so that the receiver can verify $B$ by invoking the ``free'' interface ${\tt H1.verify}(B,\widehat{B}.{\tt hash})$, instead of invoking ${\tt H1.compute}(B)$ to recompute $f(B)$.

\Paragraph{Random oracle used in Merkle tree.}
In \tool, the nodes also use a hash function to construct and verify the Merkle tree. It will be convenient to model this hash function as a separate random oracle ${\tt H2}$ that is independent of ${\tt H1}$.\footnote{Note that the leaves on our Merkle tree are hashes of blocks. The hashes of the blocks are obtained using ${\tt H1}$, while the non-leave nodes on the Merkle tree are computed using ${\tt H2}$.} Since a single random oracle can be easily used to implement two independent random oracles, doing so does not make any difference in implementation. The oracle ${\tt H2}$ also has two interfaces, ${\tt H2.compute}(x)$ and ${\tt H2.verify}(x,y)$. An honest node invokes ${\tt H2.compute}(x)$ for constructing the Merkle tree (Line~\ref{code:h2computeinitial} and Line~\ref{code:h2compute} of Figure~\ref{fig:code}), and invokes ${\tt H2.verify}(x,y)$ for verifying the Merkle tree (Line~\ref{code:h2verify} of Figure~\ref{fig:code}). The adversary may invoke these two interfaces in {\em arbitrary} ways --- for example, it may choose to invoke ${\tt H2.compute}(x)$ for verifying the Merkle proof. All invocations of ${\tt H2.compute}(x)$ and ${\tt H2.verify}(x,y)$ are ``free''. Separating these two interfaces helps to make our discussion less cumbersome --- for example, we can refer to ``the invocation of the random oracle ${\tt H2}$ at Line~\ref{code:h2verify} of Figure~\ref{fig:code}'' simply as ``${\tt H2.verify}()$''.

\begin{figure}
\begin{subfigure}{.48\textwidth}
\pgfplotsset{compat=newest,
legend style={font=\footnotesize},
label style={font=\footnotesize},
tick label style={font=\footnotesize},
title style={font=\footnotesize}}
\centering\captionsetup{width=.8\linewidth}
\begin{framed}
\footnotesize\small
\begin{algorithmic}[1]
\State $V' \leftarrow$ \{(genesis block, genesis block's attachment)\};
\State ${\tt last'} \leftarrow$ hash of the genesis block;
\State
\State Nakamoto() \{
\State \hspace*{2mm}{\bf repeat forever} \{
\State \hspace*{8mm}ReceiveState();
\State \hspace*{8mm}Mining();
\State \hspace*{8mm}SendState();
\State \hspace*{2mm}\}
\State \}
\State
\State Mining() \{
\State \hspace*{2mm}$B'.{\tt transactions'} \leftarrow$ get\_transactions();
\State \hspace*{2mm}$B'.{\tt prev'} \leftarrow {\tt last'}$;
\State \hspace*{2mm}$B'.{\tt nonce'} \leftarrow$  new\_nonce();
\State \hspace*{2mm}$\widehat{B'}.{\tt hash'} \leftarrow {\tt H1'.compute}(B')$;
\label{code:h1pcompute}
\State \hspace*{2mm}{\bf if} ($\widehat{B'}.{\tt hash'}$ has $\log_2 (\frac{1}{kp})$ leading zeroes and $\log_2 k$ trailing zeroes)
{\bf then} ProcessBlock($B'$, $\widehat{B'}$);
\State \}
\State
\State ProcessBlock($B'$, $\widehat{B'}$) \{
\State \hspace*{2mm}verify that $\widehat{B'}.{\tt hash'}$ has $\log_2 (\frac{1}{kp})$ leading zeroes and $\log_2 k$ trailing zeroes;
\State \hspace*{2mm}verify that ${\tt H1'.verify}(B', \widehat{B'}.{\tt hash'})= {\tt true}$;
\label{code:h1pverify}
\State \hspace*{2mm}verify that $B'.{\tt prev'} = \widehat{A'}.{\tt hash'}$
for some block $(A', \widehat{A'}) \in V'$;
\State \hspace*{2mm} {\bf if} (any of above 3 verifications fails) {\bf then return};
\State
\State \hspace*{2mm}$V' \leftarrow V' \cup \{(B', \widehat{B'})\}$;
\State \hspace*{2mm}${\tt last'} \leftarrow \mbox{hash of the last block on the longest path in $V'$}$;
\State \}
\State
\State SendState() \{
\State \hspace*{2mm} send $V'$ to other nodes;
\State \hspace*{2mm} // In implementation, only need to send those blocks not sent before.
\State \}
\State
\State ReceiveState() \{
\State \hspace*{2mm} {\bf foreach} ($B'$, $\widehat{B'}$) $\in$ received state {\bf do} \State \hspace*{6mm} ProcessBlock($B'$, $\widehat{B'}$);
\State // Note that a block $B'_1$ should be processed before $B'_2$ if
$B'_2.{\tt prev'}$ points to $B'_1$.
\State \}
\end{algorithmic}
\normalsize
\end{framed}
\end{subfigure}
\vspace*{2mm}
\caption{Pseudo-code of Nakamoto consensus.}
\label{fig:nakacode}
\vspace*{-3mm}
\end{figure}

\Paragraph{Pseudo-code for Nakamoto consensus.}
To make our reduction in Lemma~\ref{lemma:reduction} rigorous, we specify the Nakamoto consensus protocol in Figure~\ref{fig:nakacode} --- this specification is the same as in Pass et al.'s work~\cite{pass2017analysis}. Several aspects of the specification are worth some further clarification.

First, as explained above, we use ${\tt H1'.compute}()$ and ${\tt H1'.verify}()$ to model random-oracle-based PoW (Line~\ref{code:h1pcompute} and Line~\ref{code:h1pverify} in Figure~\ref{fig:nakacode}). To avoid notation collision, we use ${\tt H1'}$ to refer to the random oracle in Nakamoto consensus.

Second, exactly the same as in Pass et al.'s work~\cite{pass2017analysis}, in Figure~\ref{fig:nakacode}, each block $B'$ in Nakamoto consensus is sent together with its hash value, which is stored in the attachment $\widehat{B'}$. (Hence each block $B'$ in Nakamoto consensus now also has an attachment.) As explained earlier, doing so serves to enable the receiver to verify a block by invoking the ``free'' interface ${\tt H1'.verify}()$.

Finally, usually in Nakamoto consensus, a valid block's hash needs to have $\log_2 \frac{1}{p}$ leading zeroes. To simplify discussion later, we instead require a valid block's hash to have $\log_2 (\frac{1}{kp})$ leading zeroes and $\log_2 k$ trailing zeroes. Since we model hash functions as random oracles and since $\log_2 (\frac{1}{kp}) + \log_2 k = \log_2 \frac{1}{p}$, such a change does not make any material difference for the protocol.

\Paragraph{Modelling adversarial message delay.}
The adversary can manipulate the message delay in arbitrary ways, as long as the delay does not exceed $\Delta$. The adversary also sees every message as soon as it is sent. To model this, we follow exactly the same approach as in \cite{pass2017analysis}: We imagine that honest nodes do not communicate with each other directly. Instead, an honest node always directly sends its message to the adversary (with no delay). The adversary then must deliver this message to all other honest nodes within $\Delta$ ticks, without modifying the content of the message.

\vspace*{3mm}
\subsection{Overview of Proof for Lemma~\ref{lemma:reduction}}
\label{app:reductionoverview}
\vspace*{3mm}

Since the proof for Lemma~\ref{lemma:reduction} in Appendix~\ref{app:reductionproof}
is long, we first give an overview here.
Recall from Appendix~\ref{app:reductionformal} that \tool uses two random oracles ${\tt H1}$ and ${\tt H2}$, and they are invoked via interfaces ${\tt H1.compute}()$, ${\tt H1.verify}()$, ${\tt H2.compute}()$, and ${\tt H2.verify}()$.

The first conceptual step in the proof of Lemma~\ref{lemma:reduction} is to clean up various low-probability bad events such as collisions, so that the execution (of \tool against the given adversary $\mathcal{A}$) becomes easier to reason about. Specifically, related to ${\tt H1}$, there are two bad events:
\begin{itemize}
\item The execution contains some invocation ${\tt H1.verify}(x, y)$ such that ${\tt H1.compute}(x)$ has not been previously invoked and yet ${\tt H1.verify}(x, y)$ returns ${\tt true}$.
\item The execution contains some invocations ${\tt H1.compute}(x_1)$ and ${\tt H1.compute}(x_2)$ such that $x_1\ne x_2$ and ${\tt H1.compute}(x_1) = {\tt H1.compute}(x_2)$.
\end{itemize}
The first step in the proof of Lemma~\ref{lemma:reduction} modifies the semantics of  ${\tt H1.verify}(x, y)$ and ${\tt H1.compute}(x)$, so that the above two bad events will never happen. The proof will further modify the semantics of  ${\tt H2.verify}(x, y)$ and ${\tt H2.compute}(x)$ in a similar way. We will prove that all these modifications only change the distribution of the execution by some exponentially small amount.

The next conceptual step is the central part of the proof. Without loss of generality, let us consider $i=0$ in Lemma~\ref{lemma:reduction}, and we simply use $\mathcal{A}'$ to denote $\mathcal{A}'_0$.
For any given adversary $\mathcal{A}$ for $(k,p, \lambda, T)$-\tool, we need to construct another adversary $\mathcal{A}'$ for $(p, \lambda, T)$-Nakamoto such that chain $0$ in the execution of $(k,p, \lambda, T)$-\tool against $\mathcal{A}$ behaves almost exactly the same as the chain in the execution of $(p, \lambda, T)$-Nakamoto against $\mathcal{A}'$.

To facilitate understanding, all notations for the execution of $(p, \lambda, T)$-Nakamoto against $\mathcal{A}'$ will come with the superscript ``$'$'', while notations for the execution of $(k,p, \lambda, T)$-\tool will not. Let $u'$ denote any honest node in $(p, \lambda, T)$-Nakamoto. We construct $\mathcal{A}'$ in the following way: $\mathcal{A}'$ is composed of a {\em middlebox} and the adversary $\mathcal{A}$ --- putting it another way, $\mathcal{A}'$ will invoke $\mathcal{A}$ as a black-box. The honest nodes in $(p, \lambda, T)$-Nakamoto only interact with the middlebox. The middlebox internally simulates an execution of $(k,p, \lambda, T)$-\tool running against $\mathcal{A}$, by simulating all the \tool honest nodes and by invoking $\mathcal{A}$. We will use $u$ to denote any (simulated) honest node in this simulated execution.

The design of the middlebox is the key. In each tick, to simulate the honest node $u$ (which runs \tool), the middlebox observes the outgoing message from the corresponding honest node $u'$ (which runs Nakamoto):
\begin{itemize}
\item If $u'$ sends out a new block $B'$, it means that $u'$ successfully mined $B'$ in this tick. The middlebox then simulates the mining by $u$. When $u$ invokes ${\tt H1.compute}()$, the middlebox directly returns $\widehat{B'}.{\tt hash}$ as the result. Since $\widehat{B'}.{\tt hash}$ is already known to have $\log_2 (\frac{1}{kp})$ leading zeroes and $\log_2 k$ trailing zeroes, this will cause $u$ to successfully mine a block $B$, and $B$ will eventually be assigned to chain $0$. The middle box records the mapping from $B'$ to $B$ in a table, and records the mapping from $B$ to the query result (i.e., $\widehat{B'}.{\tt hash}$) in another table.
\item If $u'$ does not send out any message, it means that $u'$ has failed to mine a block in this tick.
The middlebox then simulates the mining by $u$. When $u$ invokes ${\tt H1.compute}()$, the middlebox returns a uniformly random value $y$, conditioned upon that $y$ does not correspond to successful mining in Nakamoto. (In other words, $y$ should either not have $\log_2 (\frac{1}{kp})$ leading zeroes or not have $\log_2 k$ trailing zeroes.) Finally, the middlebox does proper bookkeeping as earlier.
\end{itemize}

The middle-box also needs to interact with the simulated adversary $\mathcal{A}$:
\begin{itemize}
\item The middle box feeds all messages sent by the simulated honest nodes into $\mathcal{A}$.
\item If $\mathcal{A}$ want to deliver a message to a simulated honest node $u$, the middlebox feeds that message into $u$, and observes whether $u$ add any block $B$ to its chain $0$. If yes, the middlebox converts $B$ (which is an \tool block) to $B'$ (which is a Nakamoto block), and delivers $B'$ to the corresponding $u'$.
\item If $\mathcal{A}$ invokes ${\tt H1.compute}(B)$ (i.e., the PoW query), the middlebox converts $B$ (which is an \tool block) to $B'$ (which is a Nakamoto block). The middlebox then ``tunnels'' this request to the random oracle ${\tt H1}'$ in the Nakamoto execution, by invoking ${\tt H1'.compute}(B')$ and then relaying the response to $\mathcal{A}$. (This invocation of ${\tt H1'.compute}()$ will be ``charged'' against $\mathcal{A}'$, of which the middlebox is a part.) Finally, the middlebox does proper bookkeeping.
\item All invocations of ${\tt H1.verify}()$, ${\tt H2.compute}()$, and ${\tt H2.verify}()$ by $\mathcal{A}$ are answered by the middlebox looking up its internal tables.
\end{itemize}

Finally, one of the tables maintained by the middlebox will directly give us the mapping $\sigma_0^\tau$ as needed by Lemma~\ref{lemma:reduction}.

\vspace*{3mm}
\subsection{Proof for Lemma~\ref{lemma:reduction}}
\label{app:reductionproof}
\vspace*{3mm}

\begin{proof} (for Lemma~\ref{lemma:reduction})
Without loss of generality, we prove this lemma for $i=0$. To simplify notation, we use $\mathcal{A}'$ to denote $\mathcal{A}'_0$ in the lemma.
Let ${\tt H1}$ and ${\tt H2}$ be the two random oracles used in $\tt{EXEC}(\mbox{\tool}, k,p, \lambda, T, \mathcal{A})$, and ${\tt H1'}$ be the (single) random oracle used in $\tt{EXEC}(\mbox{Nakamoto}, p, \lambda, T, \mathcal{A}')$. We will design and reason about several hybrid executions. For convenience, define  ${\tt Hybrid1}$ to be $\tt{EXEC}(\mbox{\tool}, k,p, \lambda, T, \mathcal{A})$.

\Paragraph{Designing ${\tt Hybrid2}$.}
%
We define ${\tt Hybrid2}$ to be the same as ${\tt Hybrid1}$, except that in ${\tt Hybrid2}$ we change the semantics of all invocations to ${\tt H1.compute}()$, ${\tt H1.verify}()$, ${\tt H2.compute}()$, and ${\tt H2.verify}()$.
To do so, we maintain three tables $T_1$, $T_2$, and $T_3$. The three tables contain entries in the form of $(B, hash)$, $(x,y)$, and $(B, B')$, respectively.
Let the $k$ genesis blocks of \tool be $(B_0, \widehat{B_0}), (B_1, \widehat{B_1}), ... , (B_{k-1}, \widehat{B_{k-1})}$, and let the genesis block of Nakamoto be $(B', \widehat{B'})$. Initially, $T_1$ contains entries $(B_{i}, \widehat{B_{i}}.\tt{hash})$ for each $i \in [0, k-1]$, $T_2$ is empty, and $T_3$ contains the entry $(B_{0},B')$.

\begin{itemize}
\item In ${\tt Hybrid2}$, when ${\tt H2.verify}(x,y)$ is invoked, we return ${\tt true}$ if $(x,y)\in T_2$, and ${\tt false}$ otherwise.

We define ${\tt bad1}$ to be the event that the execution
${\tt Hybrid1}$ contains some invocation ${\tt H2.verify}(x, y)$ such that ${\tt H2.compute}(x)$ has not been previously invoked and yet ${\tt H2.verify}(x, y)$ returns ${\tt true}$. Given that ${\tt H2}$ is a random oracle, and given that the length of the execution ${\tt Hybrid1}$ is $\mbox{poly}(\lambda)$, we have
$\Pr[{\tt bad1}] = exp(-\Omega(\lambda))$.

\item In ${\tt Hybrid2}$, when ${\tt H2.compute}(x)$ is invoked, if there exists some $y$ such that $(x,y)\in T_2$, we directly return $y$. (By our design, such a $y$ must be unique.) Otherwise we choose a uniformly random binary string $y$ of $\lambda$ length. We next check whether $T_2$ contains any entry in the form of ($x_2$, $y$) for any $x_2$. If yes, we re-choose a uniformly random $y$, until there is no collision.
    We next insert $(x,y)$ into $T_2$, and returns $y$ to the invoking party.

 We define ${\tt bad2}$ to be the event that the execution ${\tt Hybrid1}$ contains some invocations ${\tt H2.compute}(x_1)$ and ${\tt H2.compute}(x_2)$ such that $x_1\ne x_2$ and ${\tt H2.compute}(x_1) = {\tt H2.compute}(x_2)$. By similar reasoning as above, we have $\Pr[{\tt bad2}] = exp(-\Omega(\lambda))$.

\item In ${\tt Hybrid2}$, when ${\tt H1.verify}(B,y)$ is invoked, we return ${\tt true}$ if $(B, y) \in T_1$, and ${\tt false}$ otherwise.

    We define ${\tt bad3}$ to be the event that the execution
${\tt Hybrid1}$ contains some invocation ${\tt H1.verify}(B, y)$ such that ${\tt H1.compute}(B)$ has not been previously invoked and yet ${\tt H1.verify}(B, y)$ returns ${\tt true}$. By similar reasoning as above, we have $\Pr[{\tt bad3}] = exp(-\Omega(\lambda))$.

\item In ${\tt Hybrid2}$, when ${\tt H1.compute}(B)$ is invoked, if there exists some $y$ such that $(B,y)\in T_1$, we directly return $y$. Otherwise
we check whether there is some $B'$ such that $(B, B')\in T_3$. If not, we construct $B'$ based on $B$ (as described next), and add $(B, B')$ into $T_3$. We then invoke ${\tt H1'.compute}(B')$, and let the return value be $y$. We add $(B, y)$ to $T_1$, and return $y$ as the answer to ${\tt H1.compute}(B)$.

To construct $B'$ based on $B$, we first set
$B'.{\tt transactions'}\leftarrow B.{\tt transactions}$,
and set $B'.{\tt nonce'}$ to be a uniformly randomly chosen nonce. Next
we check whether $T_2$ contains any entry in the form of $(x_0|x_1, B.{\tt root})$ for some $x_0$ and $x_1$, each of length $\lambda$. (Given that $T_2$ contains no collisions, $x_0$ and $x_1$ must be unique, if they do exist.) If yes, we continue to look for an entry in the form of $(x_{00}|x_{01}, x_0)$, for some $x_{00}$ and $x_{01}$ each of length $\lambda$. We repeat such a step for $\log_2 k$ times, to obtain $x_0$, $x_{00}$, $x_{000}$, $\ldots$. Let $x_\bot$ be the last value in the previous sequence of $\log_2 k$ values, and we set $B'.{\tt prev} \leftarrow x_\bot$.
If any of the above $\log_2 k$ steps cannot proceed due to $T_2$ not containing the needed entry,  we set $B'.{\tt prev} \leftarrow B.{\tt root}$.

We define ${\tt bad4}$ to be the event that in the execution of ${\tt Hybrid2}$, the table $T_3$ contains two entries $(B_1, B'_1)$ and $(B_2, B'_2)$ such that $B_1 = B_2$ or $B'_1 = B'_2$. Since each $B'$ comes with a new nonce and since the length of the execution is $\mbox{poly}(\lambda)$, we have $\Pr[{\tt bad4}] = exp(-\Omega(\lambda))$.
\end{itemize}

\Paragraph{Execution ${\tt Hybrid2}$ is close to ${\tt Hybrid1}$.}
Conditioned upon none of the four bad events (${\tt bad1}$ through ${\tt bad4}$) happening, one can verify that the joint distribution of the return values of all the invocations of the four interfaces (i.e., ${\tt H1.compute}()$, ${\tt H1.verify}()$, ${\tt H2.compute}()$, and ${\tt H2.verify}()$) is the same in ${\tt Hybrid1}$ and ${\tt Hybrid2}$. Since by a union bound, the probability of none of the four bad events happening is at least $1- exp(-\Omega(\lambda))$, we conclude that ${\tt Hybrid1}$ and ${\tt Hybrid2}$ are strongly statistically close.

\Paragraph{Designing ${\tt Hybrid3}$.}
We define ${\tt Hybrid3}$ to be the same as ${\tt Hybrid2}$, except that i)
the honest nodes in ${\tt Hybrid3}$ run $(p, \lambda, T)$-Nakamoto, rather than the protocol in ${\tt Hybrid2}$, and ii) the adversary $\mathcal{A}'$ in ${\tt Hybrid3}$ is composed of a special {\em middlebox} and the adversary $\mathcal{A}$ in ${\tt Hybrid2}$. The middle-box converts i) incoming messages to $\mathcal{A}'$ so that they can be fed into $\mathcal{A}$, and ii) outgoing messages from $\mathcal{A}$ so that they can be sent by $\mathcal{A}'$ (to the honest nodes running Nakamoto consensus). Furthermore, all the honest nodes in ${\tt Hybrid2}$ will be simulated by the middlebox in ${\tt Hybrid3}$.
Consider any given honest node $u'$ in ${\tt Hybrid3}$, which runs $(p, \lambda, T)$-Nakamoto and maintains a single chain. The middlebox internally simulates a counterpart of $u'$, and let us call this simulated counterpart as node $u$. This (simulated) node $u$ runs the protocol in ${\tt Hybrid2}$, and hence maintains $k$ parallel chains. We will next explain how the middlebox exactly works. To avoid notation collision, we will use $u'$ to denote any honest node in ${\tt Hybrid3}$, and $u$ to denote any honest node in ${\tt Hybrid2}$ and any (simulated) honest node in ${\tt Hybrid3}$ as simulated by the middlebox.

In each tick, if node $u'$ broadcasts a new block ($B'$, $\widehat{B'}$), then:
\begin{itemize}
\item The middle-box simulates the execution of Mining() by $u$. When $u$ invokes ${\tt H1.compute}(B)$, the middle-box will return $\widehat{B'}.{\tt hash}$ as the answer to ${\tt H1.compute}(B)$.
\item The middle-box adds $(B, \widehat{B}.{\tt hash})$ into the table $T_1$, where $(B, \widehat{B}.{\tt hash})$ was constructed by $u$ during Mining().
\item The middle-box adds $(B, B')$ into the table $T_3$.
\item The middle-box feeds the message sent by $u$ (if any) into $\mathcal{A}$.
\end{itemize}

In each tick, if node $u'$ does not broadcast a new block, then:
\begin{itemize}
\item The middle-box simulates the execution of Mining() by $u$. When $u$ invokes ${\tt H1.compute}(B)$, the middle-box returns $y$ as the answer. Here $y$ is chosen as a uniformly random binary string of length $\lambda$. But if $y$ has $\log_2 (\frac{1}{kp})$ leading zeroes and $\log_2 k$ trailing zeroes, the middle-box will re-choose $y$ uniformly randomly, until $y$ does not have such a property.
\item The middle-box adds $(B, \widehat{B}.{\tt hash})$ into the table $T_1$, where $(B, \widehat{B}.{\tt hash})$ was constructed by $u$ during Mining().
\item The middlebox next constructs $B'$ based on $B$, in the same process as
when ${\tt H1.compute}(B)$ is invoked in ${\tt Hybrid2}$. The middlebox adds $(B, B')$ into the table $T_3$.
\item The middle-box feeds the message sent by $u$ (if any) into $\mathcal{A}$.
\end{itemize}

Recall that in ${\tt Hybrid3}$, $\mathcal{A}$ is being used as a component by  $\mathcal{A}'$. In ${\tt Hybrid3}$, all invocations of ${\tt H1.compute}()$, ${\tt H1.verify}()$, ${\tt H2.compute}()$, and ${\tt H2.verify}()$ by $\mathcal{A}$ are processed in the same way in the ${\tt Hybrid2}$.
In each tick, if $\mathcal{A}$ (as part of $\mathcal{A}'$) intends to deliver a message to node $u$, then for each tuple $(B, \widehat{B})$ in the message:
\begin{itemize}
\item The middlebox simulates the execution of ProcessBlock$(B, \widehat{B})$ by node $u$.
\item If $u$ adds $(B, \widehat{B})$ to $V_0$, the middlebox will find $B'$ such that $(B, B') \in T_3$. The middlebox sets $\widehat{B'}.{\tt hash} \leftarrow \widehat{B}.{\tt hash}$, and forwards/delivers the tuple $(B', \widehat{B'})$ to node $u'$.
    Note that for this step, there must exists some $B'$ such that $(B, B') \in T_3$. The reason is that $B$ already passed the verification by $u$, and part of that verification process invoked
    ${\tt H1.verify}(B, \widehat{B}.{\tt hash})$ with a return value of ${\tt true}$.
    Hence we have $(B, \widehat{B}.{\tt hash})\in T_1$. One can easily verify that whenever the execution ${\tt Hybrid3}$ adds $(B, \widehat{B}.{\tt hash})$ into the table $T_1$, some entry $(B, B')$ must be added into the table $T_3$.
\end{itemize}

\Paragraph{Properties of ${\tt Hybrid3}$.}
Define ${\tt bad5}$ to be the event where in ${\tt Hybrid3}$, the table $T_3$ contains two entries $(B_1, B'_1)$ and $(B_2, B'_2)$ such that either $B_1 = B_2$ or $B'_1 = B'_2$. In ${\tt Hybrid3}$, a new entry $(B, B')$ is added to $T_3$ only when some honest node $u$ invokes ${\tt H1.compute}()$ or $\mathcal{A}$ invokes ${\tt H1.compute}()$. When $u$ invokes ${\tt H1.compute}()$, both the nonce in $B$ and the nonce in $B'$ are uniformly random. When $\mathcal{A}$ invokes ${\tt H1.compute}()$, a new entry $(B, B')$ is added to $T_3$ only if $T_3$ does not already contain an entry for $B$. In such a case, $B'$ in the new entry $(B, B')$ always has a random nonce. Putting both cases together and since the length of the execution is $\mbox{poly}(\lambda)$, we have $\Pr[{\tt bad5}] = exp(-\Omega(\lambda))$.


It is easy to see that ${\tt Hybrid3}$ is just an execution of the $(p, \lambda, T)$-Nakamoto protocol against the adversary $\mathcal{A}'$ (which is composed of $\mathcal{A}$ and the middlebox).
Furthermore, one can (tediously) verify that conditioned upon neither ${\tt bad4}$ nor ${\tt bad5}$ happening, the view of $\mathcal{A}$ (i.e., the randomness in $\mathcal{A}$, all the incoming messages to $\mathcal{A}$, and all the return values for invocations of ${\tt H1}$ and ${\tt H2}$ by $\mathcal{A}$) in ${\tt Hybrid2}$ follows the same distribution as the view of $\mathcal{A}$ in ${\tt Hybrid3}$. Hence conditioned upon neither ${\tt bad4}$ nor ${\tt bad5}$ happening, the behavior of $\mathcal{A}$ in ${\tt Hybrid2}$ follows the same distribution as the behavior of $\mathcal{A}$ in ${\tt Hybrid3}$. In turn, conditioned upon neither ${\tt bad4}$ nor ${\tt bad5}$ happening, the state of all the honest nodes $u$ in  ${\tt Hybrid2}$ is identically distributed as the state of all the (simulated) honest nodes $u$ in  ${\tt Hybrid3}$.

Finally, conditioned upon the event ${\tt bad5}$ not happening, the table $T_3$ in ${\tt Hybrid3}$ gives a one-to-one mapping $\sigma_0^\tau$ from each block $B'$ in all the $V'$ on all the nodes $u'$ in ${\tt Hybrid3}$ to some block $\sigma_0^\tau(B')$ in all the $V_i$ on all the (simulated) nodes $u$ in ${\tt Hybrid3}$. Furthermore, $\sigma_0^\tau$ guarantees that $B'$ is an honest block iff $\sigma_0^\tau(B')$ is an honest block, and also guarantees that $B'.{\tt prev} = \widehat{\sigma_0^\tau(B')}.{\tt leaf}$.
One can verify that in ${\tt Hybrid3}$, a node $u'$ adds a block $B'$ to its $V'$ iff the (simulated) node $u$ adds the block $\sigma_0^\tau(B')$ to $V_0$.

The following summarizes what we have by now, conditioned upon neither ${\tt bad4}$ nor ${\tt bad5}$ happening:
\begin{itemize}
\item ${\tt Hybrid1} = \tt{EXEC}(\mbox{\tool}, k,p, \lambda, T, \mathcal{A})$.
\item ${\tt Hybrid1}$ and ${\tt Hybrid2}$ are strongly statistically close. (This holds even conditioned upon ${\tt bad4}$ not happening.)
\item The joint state of all the honest nodes $u$ in  ${\tt Hybrid2}$ is identically distributed as the joint state of all the simulated honest nodes $u$ in  ${\tt Hybrid3}$.
\item An honest node $u'$ in ${\tt Hybrid3}$ adds $B'$ to its $V'$ iff the corresponding simulated honest node $u$ in ${\tt Hybrid3}$  adds $\sigma_0^\tau(B')$ to its $V_0$, and we must further have $B'.{\tt prev} = \widehat{\sigma_0^\tau(B')}.{\tt leaf}$.
\item $B'$ is an honest block iff $\sigma_0^\tau(B')$ is an honest block.
\item ${\tt Hybrid3}= \tt{EXEC}(\mbox{Nakamoto}, p, \lambda, T, \mathcal{A}')$.
\end{itemize}
Combining all the above then completes the proof for the lemma.
\end{proof}

\vspace*{3mm}
\subsection{Security Properties of Conflux}
\label{sec:conflux}
\vspace*{3mm}

This section presents concrete results showing that in our simulation,
as the throughput of Conflux~\cite{conflux} increases, the security properties of Conflux deteriorate. Our simulation results are based on the balance attack, which is related to the one in \cite{bagaria18,natoli16}. We do note, however, that our simulation results are obtained only under some example parameters from the Conflux paper~\cite{conflux}. These results may change under other parameters.

We do not claim novelty or research contribution in this section, since the balance attack on Conflux has been folklore and is not the focus of this paper. Instead, our goal is simply to show, concretely, how the security properties of Conflux deteriorate in our simulation. Related observations on Conflux have also been independently made by Bagaria et al.~\cite{bagaria18} and Fitzi et al.~\cite{fitzi18}.

\Paragraph{Possible fix.}
A simple and easy fix to strengthen Conflux's security properties against the balance attack is for Conflux to operate closer to the parameter range of the GHOST protocol~\cite{ghost}, with relatively ``low'' throughput. (Such relatively ``low'' throughput will still be much higher than the throughput of the GHOST protocol.) In fact, our simulation, under the ``low'' throughput setting, already confirms the effectiveness of this fix.

Another good solution for improving Conflux's security properties against the balance attack is for Conflux to explicitly use multiple chains and to explicitly force the adversary to split its computational power across these multiple chains, as in \cite{fitzi18}. Other fixes may also exist, which is however beyond the scope of this paper.

\Paragraph{Review of Conflux.}
Conflux is based on the GHOST protocol~\cite{ghost}, while GHOST is an improvement over the Bitcoin protocol.
Imagine for now that the Bitcoin protocol disseminates all blocks that have ever been mined, regardless of whether they are on the longest path.
Let $\Psi$ be the set of blocks that any given node have seen by any given point of time. Then all blocks in $\Psi$ will conceptually form a direct tree, where block $A$ has a directed edge to block $B$ if $A$ extends from $B$. With a slight abuse of notation, we also use $\Psi$ to refer to this tree. In the Bitcoin protocol, to produce the total ordering of confirmed blocks, a node will follow the longest-path rule. Namely, the node will output the longest path (after truncating a certain number of blocks at the end) from the root to some leaf of the tree, where the leaf is chosen such that the path length is maximized.

In GHOST, rather than choosing the longest path, a node instead chooses the path in the following way. The node will start from the root $A$ of $\Psi$, and then find $A$'s child $B$ such that $B$'s subtree has the most blocks, as compared to the subtrees for other children of $A$. (Namely, $B$'s subtree is the ``heaviest''.) The path now contains $AB$. Next, the node finds $B$'s child $C$ such $C$'s subtree is the heaviest, among the subtrees for all children of $B$. The path now contains $ABC$. The process continues until we reach a leaf.

Conflux builds upon GHOST by introducing a mechanism for including all blocks in the tree in the final total ordering of blocks, rather than just the blocks on the chosen path. With such a mechanism, Conflux can now obtain much higher throughput by increasing the block generation rate. For example, in one setting, Conflux generates one $4$MB block every $5$ seconds, despite that the time needed to propagate a $4$MB block to all nodes is around $100$ seconds. (In GHOST, such high block generation rate is never helpful, since most blocks would not be on the chosen path and would be wasted.)

\begin{figure*}
	\centering
	\includegraphics[width=16cm]{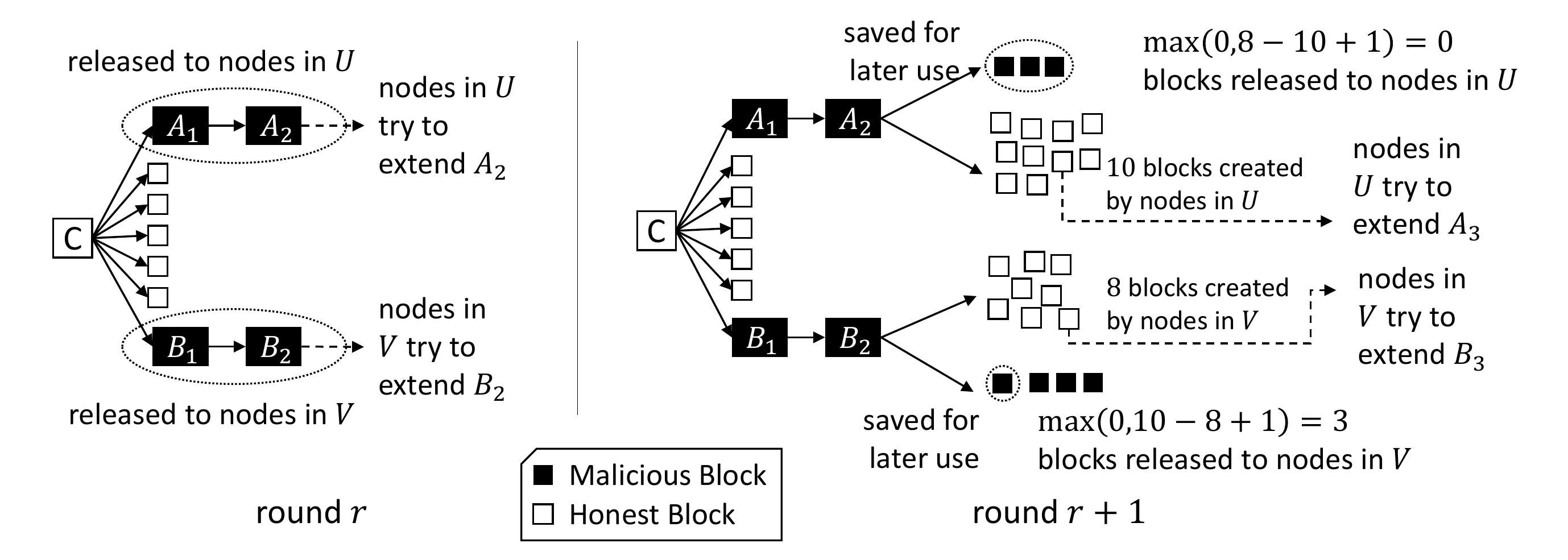} 
	\vspace{-2mm}
	\caption{Adversary's attack strategy in the balance attack.}
	\vspace{-2mm}
	\label{fig:attack}
\end{figure*}

\Paragraph{Setting for the balance attack.}
For simplicity, we will consider the following setting. We assume that Conflux proceeds in {\em rounds}, where each round corresponds to the time needed to propagate a block to all nodes. In each round, an honest node will try to mine a new block, and if successfully, will propagate the new block to all other nodes. Any blocks propagated will be received by all nodes at the end of the current round. In each round, the adversary will also mine new blocks, based on what it has seen up to the previous round. Once the honest nodes send out their newly mined blocks in a given round, we assume that adversary can observe those blocks, and then potentially send out malicious blocks in response to those. We also assume that the adversary can choose to send a malicious block to some honest nodes but not others. Since an honest node always forwards all blocks it receives, those other honest nodes will still receive those malicious blocks one round later. Conflux uses a deterministic tie-breaking rule for choosing which blocks to extend from, when there are multiple candidate blocks. We follow such a design. (Randomized tie-breaking rules will actually make the attack more effective, since randomized rules will result in honest nodes splitting their computation power.) Finally, Conflux mentions a stale block detection heuristic, which we do not consider since with the large threshold used in Conflux, it is unlikely to affect our results. 

\Paragraph{Intuition behind the balance attack.}
Before detailing the balance attack, we first provide some intuitions. The balance attack aims to maintain two forks that keep growing. Each fork will have some weight, which is roughly the number of blocks in that fork. Let $U$ be any set consisting of half of all the honest nodes, and let set $V$ contain the remaining honest nodes.
Imagine that in round $r$, nodes in $U$ and nodes in $V$ are working on the two forks, respectively. The adversary wants such property to continue to hold in round $r+1$. Let $x_A$ and $x_B$ be the number of honest blocks generated by nodes in $U$ and in $V$, respectively. The adversary will send $y_A$ malicious blocks (extending the first fork) to nodes in $U$ (but not nodes in $V$), such that $x_A+y_A > x_B$. Similarly, it sends $y_B$ malicious blocks (extending the second fork) to nodes in $V$, such that $x_B+y_B > x_A$. Doing so will keep nodes in $U$ and nodes in $V$ continuing working on each of the two forks, respectively, in round $r+1$. If the adversary do not have enough malicious blocks to follow the above step, the attack stops.

This attack is effective in Conflux but not in GHOST because with Conflux's design, high throughput is achieved by generating many block per round. For example, in one setting, Conflux generates one $4$MB block every $5$ seconds, while the time needed to propagate a $4$MB block is around $100$ seconds. This gives about $20$ blocks per round. The GHOST protocol never intends to work under such parameters. Because of this particular way of achieving high throughput, in Conflux, the expectations of $x_A$, $x_B$, $y_A$, and $y_B$ will all be much larger than those in GHOST. (The ratio of the expectations of the four quantities remain the same in Conflux as in GHOST.) In particular in GHOST, the expected number of malicious blocks generated by the adversary in one round is below $1$, which means that it is often $0$. This would render the balance attack ineffective.

\Paragraph{Attack strategy.}
Start from any  given round $r$, the following gives the detailed steps of the balance attack (Figure~\ref{fig:attack}):
\begin{enumerate}
\item In round $r$, let block $C$ be the block that all the nodes are trying to extend from. The adversary generates blocks $A_1$ extending from $C$, $A_2$ extending from $A_1$, $B_1$ extending from $C$, and $B_2$ extending from $B_1$.
\item At the end of round $r$, the adversary disseminates $A_1$ and $A_2$ to all nodes in $U$, and $B_1$ and $B_2$ to all nodes in $V$.
\item In round $r+1$, all the honest nodes in $U$ ($V$) will try to extend from $A_2$ ($B_2$). Let $x_A$ and $x_B$ be the number of (honest) blocks generated by nodes in $U$ and $V$, respectively. At the same time, the adversary will mine as many malicious blocks as it can, with odd-numbered blocks extending from $A_2$, and even-numbered blocks extending from $B_2$.
\item At the end of round $r+1$, the adversary disseminates total $\max(0,x_B-x_A+1)$ odd-numbered blocks to nodes in $U$. This will cause nodes in $U$ to see more blocks extending from $A_2$ than blocks extending from $B_2$. Hence in the next round, nodes in $U$ will all extend from some child block $A_3$ of $A_2$. Similarly, the adversary disseminates total $\max(0,x_A-x_B+1)$ even-numbered blocks to nodes in $V$, so that all nodes in $V$ will later extend from some child block $B_3$ of $B_2$. If the adversary does not have enough blocks to disseminate in this step, the attack stops.
\item In round $r+2$, all the honest nodes in $U$ ($V$) will try to extend from $A_3$ ($B_3$). In round $r+2$ and later, the adversary's strategy is similar as the above two steps, except the following. In any given round $r'$, the adversary may not use up all the blocks it generates. Those blocks can be ``saved'' for later use. For example, when the adversary needs to disseminate $\max(0,x_B-x_A+1)$ malicious blocks to nodes in $U$, those malicious blocks do not have to extend from the last block in $A_1$'s branch --- they can extend from any block in $A_1$'s branch.

\end{enumerate}

\begin{figure}
\includegraphics[width=7cm]{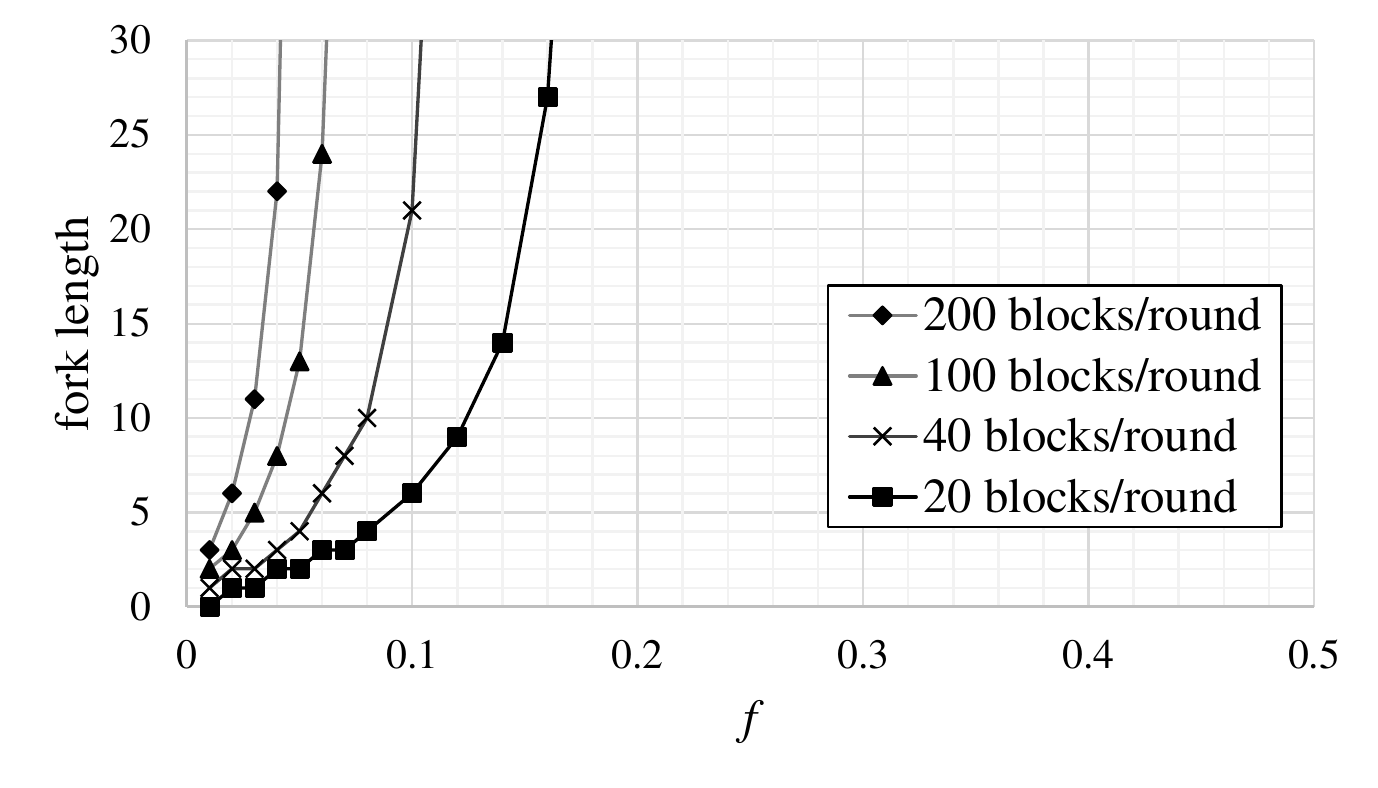}
	\vspace{-3mm}
	\caption{Fork length in Conflux when under the balance attack.}
	\vspace{-3mm}
	\label{fig:attack_sim}
\end{figure}

\Paragraph{Results.}
With the above balance attack, we use simulation to determine how long a fork the adversary can maintain. This would then correspond to the number $T$ of blocks we need to have after a block, in order for the block to get confirmed. As a reference point, Ethereum (which is also based on the GHOST protocol~\cite{ghost}) uses $T=15$. We consider the top $0.01$\% of the fork length --- this would match the $0.01$\% ``risk tolerance'' used in Conflux's evaluation~\cite{conflux}. We vary the expected number of blocks generated per round from $20$ to $200$. Out of these blocks, on expectation $f$ fraction are malicious blocks. In one of its evaluation settings, Conflux generates one $4$MB block every $5$ seconds, which would roughly correspond to $20$ blocks per round.

Figure~\ref{fig:attack_sim} presents our simulation results. It shows that with a target fork length of no more than $15$, in our simulation Conflux can only tolerate $f$ less than $0.2$ when the number of blocks per round is $20$. This $f$ deteriorates as the throughput of Conflux (i.e., number of blocks per round) increases. For example, with a target fork length of no more than $15$ and with $200$ blocks per round, in our simulation Conflux can only tolerate $f$ less than $0.1$.

\end{appendix}

\end{document}